\newcommand{\R}{\mathbb{R}}
\newcommand{\N}{\mathbb{N}}
\newcommand{\p}{\phantom{)}}
\newcommand{\ppp}{\phantom{.00}}
\renewcommand{\Pr}{\mathbb{P}}
\newcommand{\E}{\mathbb{E}}
\theoremstyle{plain}
\newtheorem{theorem}{Theorem}[section]
\newtheorem{lemma}[theorem]{Lemma}
\newtheorem{proposition}[theorem]{Proposition}
\theoremstyle{remark}
\newtheorem{definition}[theorem]{Definition}
\newtheorem{example}{Example}
\begin{document}

\title{Likelihood-based inference for modelling packet transit from thinned flow summaries}

\author{
P.A. Rahman\footnote{School of Mathematics and Statistics, \textsc{unsw} Sydney, Australia}\;\footnote{Communicating author: \texttt{P.A.Rahman@unsw.edu.au}}\;,
B.Beranger$^*$,
M. Roughan\footnote{School of Mathematical Sciences, University of Adelaide, Australia}\;,
and\; S.A. Sisson$^*$
}

\date{}

\maketitle

\begin{abstract}
The substantial growth of network traffic speed and volume presents practical challenges to network data analysis.
Packet thinning and flow aggregation protocols such as NetFlow
reduce the size of datasets by providing structured data summaries, but conversely this impedes statistical inference. 
Methods which aim to model patterns of traffic propagation typically do not account for the packet thinning and summarisation process into the analysis, and are often simplistic, e.g.~method-of-moments. As a result, they can be of limited practical use.

We introduce a likelihood-based analysis which fully incorporates packet thinning and NetFlow summarisation into the analysis.
As a result, inferences can be made for models on the level of individual packets while only observing thinned flow summary information.
We establish consistency of the resulting maximum likelihood estimator, derive bounds on the volume of traffic which should be observed to achieve required levels of estimator accuracy, and identify an ideal family of models.
The robust performance of the estimator is examined through
simulated analyses and an application on a publicly available trace dataset containing over 36m packets over a 1 minute period.
\end{abstract}


\section{Introduction}

Network traffic volumes and speeds have grown exponentially since the inception of the internet \citep{Cisco19}.
Accordingly,  accounting compromises, such as \emph{flow-level aggregation} and \emph{packet thinning}, are typically employed \citep{Hofstede14}. 

In flow-level aggregation (flows are sets of packets with particular grouping identifiers), broad flow characteristics are recorded instead of individual packet information.
These typically include the flow size (number of packets), start time, and duration.
Commonly used protocols include \textsc{ipfix} and \emph{NetFlow} (here we use NetFlow as a general term to denote flow summarisation).
NetFlows reduce the volume of stored information as each flow, of potentially millions of packets, is summarised into typically eight numbers \citep{Hofstede14}.

Packet thinning is a parallel strategy whereby only selected packets are recorded, either randomly or deterministically, as they pass through the network. 
Thinning techniques include flow sampling \citep{Hohn06}, adaptive sampling, and simple random sampling \citep{Jurga07}.
We consider the most elementary probabilistic design, Bernoulli sampling, in which individual packets are independently recorded with some constant known probability $q$ \citep{Hofstede14, Hohn06, Jurga07}.
In practice, sampling rates can be tuned to match the density and speed of traffic in the network, so that, for example, a network with speeds of 1$\text{Gb}/s$ would typically tune the sampling rate to be $q = 0.001$.
Flow-level aggregation can also be applied to packet thinned traffic.

Such data retention strategies need to be considered explicitly when analysing the summarised data, since bias, and other errors, may otherwise arise.
In traffic classification, for example, basic analysis on thinned traffic will over-represent large media applications such as video streaming since these flows are significantly larger.
Smaller but more numerous applications, such as e-mails, will conversely be under-represented.

Many of the more sophisticated network analysis techniques address either flow-level aggregation \citep{Bin08} or packet thinning \citep{Hohn03, Hohn06, Antunes16}, but fail to jointly consider both.
Analyses which have mutually addressed packet thinning and flow summarisation have assessed network volume and traffic classification \citep{Bin08, Carela10}.
In contrast, we wish to perform parametric inference on patterns of traffic propagation when observing only NetFlows obtained from thinned traffic. 

In this article, we adapt recent results of \citet{Beranger18} and \citet{Zhang19} in Symbolic Data Analysis (\textsc{sda}) to develop a likelihood-based approach for modelling packet-level network traffic data.
The resulting \emph{NetFlow likelihood} incorporates packet thinning and flow aggregation within a generative framework.
As a result, we are able to fit models, and make inference, on packet-level traffic patterns when observing only flow-level summaries (which may have been constructed from packet thinned measurements).

We also make three key contributions within traffic modelling and \textsc{sda} theory.
We first demonstrate that the NetFlow maximum likelihood estimator attains the consistency and asymptotic Normality typical of standard likelihood estimators, which are based on the full dataset. This is the first such result for SDA likelihood-based methods.
We then provide comparative bounds on the loss of information from the aggregation and thinning procedures. 
From this we are able to identify a lower bound on the minimum number of (packet thinned) NetFlows required to produce an estimator which approximates the efficiency of the \textsc{mle} computed on full data.
We also identify a family of packet-level models for which inference on the aggregated and complete data are identical.
Finally, in order to facilitate comparison with existing inference approaches, we introduce an extension of the moments-based estimators of \citet{Hohn03} for NetFlow data.  

While the NetFlow estimator has many desirable properties, it requires higher computational overheads compared to other, more statistically simple approaches, such as the method of moments.
Further, the computation of the estimator under packet thinning requires estimation, or prior knowledge, of the distribution of the flow sizes; although an empirical approximation of this can be obtained relatively easily in practice.

This article is structured as follows. We provide some background  to existing methods of network analysis, the assumed packet transit model, and our framework of analysis (\textsc{sda}) in Section~\ref{sec:related_work}.
A mathematical representation of the NetFlow is presented Section~\ref{sec:a_new_estimator}, which then allows us to define parametric NetFlow likelihoods for complete and thinned traffic.
We provide two of our key contributions, consistency and relative information loss, in Section~\ref{sec:attributes_of_the_estimator}.
Our third contribution, the optimal family of models, is presented in Section~\ref{sec:optimal_model}.
Sections~\ref{sec:simulations} and \ref{sec:real_data} are respectively dedicated to comparative analyses of the NetFlow estimator on synthetic data, and an application to real data. We conclude with a Discussion.


\section{Related work} \label{sec:related_work}
\subsection{Existing methods for traffic analyses}
Significant progress has been made on methods for assessing network volume \citep{Duffield05, Antunes16, Veitch15, Yang07, Chabchoub10, Ribeiro06, Brownlee02} and traffic classification \citep{Kundu07, Carela10, Miao16}.
Inferential methods for analysing patterns of traffic propagation are comparatively less well developed.
Existing approaches, such as series inversion \citep{Hohn06, Antunes16}, wavelets \citep{Hohn03, Stoev05}, empirical distributional estimation \citep{Antunes11}, cluster analysis \citep{Kim19}, time series \citep{You99}, and principal component analysis \citep{Lakhina04} typically fit models indirectly based on empirical characteristics. 
In some cases, the intention is to simply detect deviations from typical traffic behaviour \citep{Stoev05, Bin08, Proto10} or produce elementary network-behaviour statistics \citep{Proto10, Lee10}.
\citet{Hohn03} and \citet{Antunes18} provided simple statistical schemes whereby the parameters of a limited family of parametric models could be identified as a function of the data using method-of-moments estimation.

Methods which are applicable to thinned network traffic, and other adversarial contexts, tend to focus on the simpler analytic challenge of estimating network volume \citep{Duffield05, Ribeiro06, Chabchoub10}.
Techniques for traffic pattern analyses which also account for packet thinning typically fit secondary characteristics, such as moments, and so are limited in their flexibility and inferential use \citep{Hohn03, Hohn06, Antunes11, Antunes16}.
There are some existing methods which analyse NetFlow data, but which do not also account for packet thinning \citep{Hofstede14, Carela10, Proto10}.

The approach we develop here accounts for both packet thinning and NetFlow summarisation when modelling patterns of traffic propagation within the likelihood framework.

\subsection{The Bartlett--Lewis traffic model}
Renewal processes form a natural context for internet traffic and have previously been used extensively in traffic analysis \citep[e.g.][]{Antunes11, Antunes16, Antunes18, Gonzales10, Hohn03, Hohn06, Williamson01}.
However, \citet{Williamson01} argues that simple renewal processes are limited in their ability to jointly model packet-level \emph{burstiness} and the interaction between flows.
\citet{Hohn03} address this concern by suggesting the use of the Bartlett--Lewis renewal process; a \emph{branching renewal process} with applications in many areas including modelling storm patterns, vehicular traffic, and computer failures \citep{Onof93, Bartlett63, Lewis64a, Lewis64b}.
A substantive discussion of the original formulations of the process in \citet{Bartlett63} and \citet{Lewis64a} is provided in \citet{Daley02}.
The Bartlett--Lewis model also has natural extensions into higher dimensions \citep{Bartlett64, Daley02}.

The Bartlett--Lewis process is a sub-class of cluster renewal processes generated by two concurrent processes.
The main and subsidiary processes respectively define the Poissonian arrival of clusters and, conditionally on the arrival of each cluster, the arrival of individual points within the cluster.
The subsidiary processes are finite unidirectional random walks whose origin is the main arrival.
Contextually, the first packet in each flow forms the main process, whilst the subsequent packets form translated simple finite renewal processes.
Observed traffic is then obtained by superimposing the main process of leading packets and all its subsidiary processes of non-leading packets.
In some contexts such as the analysis of clustered vehicular traffic by \citet{Bartlett63}, the main process is not directly observable.

The Bartlett--Lewis process can be equivalently defined as a sequence of independent finite \emph{delayed} renewal processes \citep{Daley02}.
Through this framework we can develop likelihoods for both complete and flow aggregated data.
Packet thinning in the explicit context of Bartlett--Lewis processes has not been studied, but the general results of \citet{Antunes11} can be applied quite naturally.

\subsection{Symbolic data analysis}
Symbolic data analysis is a relatively new field of statistics which models distributions as its fundamental data unit \citep{Billard04}.
Observations in classical statistics are typically points in a Euclidean space, having no internal variation.
However, aggregation of classical data into distributional \emph{symbols} leads to objects with internal variation \citep{Billard03, Billard07}. 
For example, the simplest symbol is the extremal interval, obtained by mapping a finite set of random variables $\bm{X}=(X_1,\ldots,X_n)$ to its extrema $S = \left[X_{(1)}, X_{(n)}\right]$, where $X_{(i)}$ is the $i$-th order statistic of $\bm{X}$.
The remaining $n - 2$ points are then latently distributed within the interval $S$ \citep{Billard03}.

\textsc{sda} methods are designed to analyse data in such distributional forms, in addition to random intervals, including data described by random histograms \citep{Beranger18, Billard03, Billard07, Billard11, Whitaker19a, Whitaker19b} and weighted lists \citep{Billard04, Billard07}.
\textsc{sda} techniques can consequently achieve computational and storage efficiency without sacrificing statistical validity.
Many common statistical procedures have been extended to symbolic data, including regression \citep{Billard02, Billard07, Neto08, Whitaker19b}, likelihood-based inference \citep{Beranger18, Billard11, Brito12, Whitaker19a}, principal component analysis \citep{Brito11, Lauro00}, clustering \citep{Verde04}, and time series analysis \citep{Brito11}.
\textsc{sda} has been used to analyse problems in a broad range of fields including meteorology \citep{Brito12}, finance \citep{Brito12, Beranger18}, medicine \citep{Billard02, Beranger18}, agriculture \citep{Whitaker19b}, ecology \citep{Lin17}, and climatology \citep{Whitaker19a}. 

Here, we follow the likelihood-based approach of \citet{Beranger18} who construct the marginal likelihood for a symbol by accounting for the aggregation function applied to the underlying data and its generative statistical model.
\begin{proposition}[Generative symbolic likelihood \citep{Beranger18}] \label{prop:generative_symbolic_likelihood}
The likelihood function of the symbolic observation $s$ is given by
\begin{equation} \label{eq:generative_symbolic_likelihood}
    \mathcal{L}_{S}{\left(s;\theta,\vartheta\right)} \propto \int_{\mathcal{X}} f_{S\vert\bm{X}}{\left(s;\bm{x},\vartheta\right)}\,g_{\bm{X}}{\left(\bm{x};\theta\right)}\;\mathrm{d}\bm{x}, 
\end{equation}
\noindent where $f_{S\vert\bm{X}}$ is the conditional density of the symbol $S$ given the classical data $\bm{X}$, and $g_{\bm{X}}$ is the joint density (i.e.~classical likelihood function) of $\bm{X}$.
\end{proposition}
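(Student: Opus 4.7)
The plan is to derive the expression in equation~\eqref{eq:generative_symbolic_likelihood} directly from the factorisation of the joint distribution of the pair $(S,\bm{X})$ and then read off the likelihood as a function of $(\theta,\vartheta)$. Since the symbol $S$ is produced from the latent classical data $\bm{X}$ by an aggregation mechanism parameterised by $\vartheta$, while $\bm{X}$ itself is generated from $g_{\bm{X}}(\,\cdot\,;\theta)$, the joint density factorises by the chain rule as $f_{S,\bm{X}}(s,\bm{x};\theta,\vartheta) = f_{S\mid\bm{X}}(s;\bm{x},\vartheta)\,g_{\bm{X}}(\bm{x};\theta)$. First I would make this setup precise: state the assumed measurability of the aggregation map, identify the reference measure on the symbol space $\mathcal{S}$ (Lebesgue if $S$ is interval- or histogram-valued, counting if $S$ is discrete), and verify that $f_{S\mid\bm{X}}$ exists as a regular conditional density with respect to this reference measure.

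Second, I would obtain the marginal density of $S$ by integrating out $\bm{X}$ against its dominating measure on $\mathcal{X}$, yielding
\begin{equation*}
f_{S}(s;\theta,\vartheta) \;=\; \int_{\mathcal{X}} f_{S\mid\bm{X}}(s;\bm{x},\vartheta)\,g_{\bm{X}}(\bm{x};\theta)\,\mathrm{d}\bm{x}.
\end{equation*}
Treating this as a function of $(\theta,\vartheta)$ for the observed $s$ gives the symbolic likelihood $\mathcal{L}_S(s;\theta,\vartheta)$, with the proportionality constant absorbing any factor that does not depend on the parameters. An application of Fubini/Tonelli would justify exchanging integration with parameter-dependent operations, provided the usual dominated-integrability condition on $f_{S\mid\bm{X}}\,g_{\bm{X}}$ holds.

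The main obstacle I anticipate is the treatment of \emph{deterministic} aggregation, which is the case of interest for NetFlow summaries (the symbol is a fixed function $s=\pi(\bm{x})$ of the packet record). Then $f_{S\mid\bm{X}}$ is not a genuine Lebesgue density on $\mathcal{S}$ but a Dirac mass concentrated on the level set $\{\bm{x}\in\mathcal{X}:\pi(\bm{x})=s\}$, and the integral in equation~\eqref{eq:generative_symbolic_likelihood} must be interpreted via the coarea formula (or, more simply, as integration of $g_{\bm{X}}$ restricted to this level set with an appropriate Jacobian). I would handle this either by approximating $\pi$ with a sequence of smooth stochastic perturbations and passing to the weak limit, or by writing $f_{S\mid\bm{X}}(s;\bm{x},\vartheta) = \mathbf{1}\{\pi_\vartheta(\bm{x})=s\}$ and invoking the disintegration theorem so that the right-hand side of~\eqref{eq:generative_symbolic_likelihood} reduces to the conditional expectation of $g_{\bm{X}}$ along fibres of $\pi_\vartheta$. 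This is essentially a measure-theoretic bookkeeping step, but it is where care is required and where the ``$\propto$'' rather than ``$=$'' in the statement earns its keep.
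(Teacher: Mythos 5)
Your derivation is correct and is essentially the argument behind this result: the paper states Proposition~\ref{prop:generative_symbolic_likelihood} as an imported result of \citet{Beranger18} and gives no proof of its own, but the underlying construction is exactly your chain-rule factorisation of the joint density of $(S,\bm{X})$ followed by marginalisation over $\bm{X}$, with the proportionality sign absorbing any parameter-free normalisation. Your closing caveat about deterministic aggregation is also exactly where the subtlety lies in this paper's usage: $f_{S\vert\bm{X}}$ is taken to be a product of Dirac measures in the proofs of Propositions~\ref{prop:netflow_likelihood} and~\ref{prop:sampled_netflow_likelihood} (Appendix~\ref{app:proofs}), and the integral in \eqref{eq:generative_symbolic_likelihood} is there evaluated formally against those point masses rather than via an explicit coarea or disintegration argument, so your proposed measure-theoretic bookkeeping is the honest version of what the paper does implicitly.
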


The likelihood \eqref{eq:generative_symbolic_likelihood} permits fitting of the underlying model $g_{\bm{X}}$ when only observing the distributional summary $S$ rather than the full dataset $\bm{X}$. 
In the present context, this allows us to fit models for packet-level data when observing only the flow aggregated summaries.

The symbolic likelihood \emph{reduces} to the classical likelihood as the granularity of the aggregation function becomes more fine.
For example, \eqref{eq:generative_symbolic_likelihood} reduces to the standard likelihood $g_{X}(a;\theta)$ when one takes $b\downarrow a$ for the random interval $s = (a, b)$.
In this sense, the symbolic likelihood \eqref{eq:generative_symbolic_likelihood} can be seen as an approximation to the classical likelihood, where the process of summarising $\bm{X}$ to $S$ may induce some loss of information \citep{Beranger18}.
This property similarly holds in our definition of the NetFlow (Definitions~\ref{def:flow_symbol}), which is in essence an extremal interval on unidirectional data. 


\section{A new estimator} \label{sec:a_new_estimator}
\subsection{The NetFlow likelihood} \label{sec:netflow_likelihood}
Classical parametric likelihood-based inference for renewal processes uses the set of inter-renewals as observed data \citep{Karr91}. 
Inference on entire sessions can then be made by collating the inter-renewals from all observed flows.
However, this method is not immediately applicable once the flows have been aggregated into their NetFlow summaries as the individual inter-renewals are then lost.
We address this problem by first observing NetFlows as a particular type of interval-valued random variable.
We then adapt the likelihood of the renewal process to the information contained in the NetFlow symbol using \eqref{eq:generative_symbolic_likelihood}.

Before constructing the NetFlow likelihood, we first define an appropriate symbolic analogue for the NetFlow.

\begin{definition}[NetFlow symbol] \label{def:flow_symbol}
Consider a sequence of inter-renewals $\bm{X} = {\left(X_{1},\ldots,X_{M + 1}\right)}$ of random length $M$.
The NetFlow symbol $S$ of $\bm{X}$ is the image of the aggregation function
\begin{displaymath} 
\begin{aligned}
  &\varphi:\R^{(M + 1)}_{+}\rightarrow\R_{+}\times\R_{+}\times\N \\
  &\varphi{\left(\bm{X}\right)} = {\left(X_{1}, 1_{\left\{M > 0\right\}}\cdot\sum^{M + 1}_{k = 2} X_{k}, M + 1\right)} = {\left(S_{f}, S_{d}, M + 1\right)} = S,
\end{aligned}
\end{displaymath}
\end{definition}

\noindent where $1_{\left\{A\right\}}$ is the indicator function for the set $\{A\}$.
The random elements $S_{f}$, $S_{d}$, and $M + 1$ respectively define the temporal distance between  flows (measured by the initial packets in each flow), the flow duration, and flow size.
If we instead supply a sequence of inter-renewals from a thinned (sub-sampled) flow, say $\widetilde{\bm{X}}$ which is a subset of the values of $\bm{X}$, as the argument for $\varphi$, this yields the \emph{sampled NetFlow} $\widetilde{S}={\left(\widetilde{S}_{f}, \widetilde{S}_{d}, \widetilde{M} + 1\right)}$. 
By construction, we have that $\widetilde{S}_{f}\geq S_{f}$, $\widetilde{S}_{d}\leq S_{d}$, and $\widetilde{M}\leq M+1$ when $\widetilde{\bm{X}}$ is sampled from the renewal process which generated $\bm{X}$.
The set of NetFlows for an entire session is obtained by applying the aggregation function $\varphi$ to each flow.

Having established an appropriate mathematical equivalent for a NetFlow, we can now derive the NetFlow likelihood.

\begin{proposition}[NetFlow likelihood] \label{prop:netflow_likelihood}
Let $S=(S_f,S_d,M+1)$ be the NetFlow symbol obtained from a sequence of inter-renewals $\bm{X}=\left(X_1,\ldots,X_{M+1}\right)$ of random length via Definition~\ref{def:flow_symbol}.
Suppose that the $i$-th inter-renewal $X_{i}$ has density $g_{i}{\left(x_{i};\theta_{i}\right)}$, for $i = 1,\ldots,M+1$. 
Then the realised NetFlow likelihood is 
\begin{equation} \label{eq:netflow_likelihood}
    \mathcal{L}_{S}{\left(s;\bm{\theta}\right)} \propto g_{1}{\left(s_{f};\theta_{1}\right)}\,\mathcal{G}{\left(s_{d};\bm{\theta}\right)}\,p_{M}{\left(m + 1;\nu\right)},
\end{equation}
\noindent where $p_{M}$ is the probability mass function for the random flow size, 
\begin{displaymath}
\mathcal{G}{\left(s_{d};\bm{\theta}\right)} = {\left(g_{2}*\cdots*g_{m + 1}\right)}{\left(s_{d};\bm{\theta}\right)},  
\end{displaymath}
$\bm{\theta} = (\theta_{1},\ldots,\theta_{m+1})$, and $f*g$ denotes the convolution of the densities $f$ and $g$. 
\end{proposition}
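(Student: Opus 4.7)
The plan is to apply Proposition~\ref{prop:generative_symbolic_likelihood} directly, with the key work being the identification of the conditional density $f_{S\mid\bm{X}}$ induced by the deterministic aggregation $\varphi$ and the evaluation of the resulting integral via a delta-function / convolution argument.

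First I would set up the ingredients. Since $\bm{X}$ has random length $M+1$, the generative density factors as $g_{\bm{X}}(\bm{x};\bm{\theta},\nu)=p_{M}(m+1;\nu)\prod_{i=1}^{m+1}g_{i}(x_{i};\theta_{i})$, using independence of the inter-renewals in the (delayed) renewal representation. Because $S=\varphi(\bm{X})$ is a deterministic transformation, the conditional law $f_{S\mid\bm{X}}$ is a degenerate distribution supported on $\{\bm{x}:\varphi(\bm{x})=s\}$; formally it is a product of Dirac masses enforcing simultaneously $x_{1}=s_{f}$, $\sum_{k=2}^{m+1}x_{k}=s_{d}$, and length equal to $m+1$.

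Next I would evaluate \eqref{eq:generative_symbolic_likelihood} by first summing over the discrete length, which collapses to the single term $m+1$ and produces the factor $p_{M}(m+1;\nu)$. On the remaining $(m+1)$-dimensional integral I would split the variable $x_{1}$ from the rest. Integrating out $x_{1}$ against the delta at $s_{f}$ yields the factor $g_{1}(s_{f};\theta_{1})$. The remaining integral is
\begin{displaymath}
\int_{\R_{+}^{m}} \delta\!\left(s_{d}-\sum_{k=2}^{m+1}x_{k}\right)\prod_{k=2}^{m+1} g_{k}(x_{k};\theta_{k})\,\mathrm{d}x_{2}\cdots\mathrm{d}x_{m+1},
\end{displaymath}
which is by definition the density of a sum of independent variables evaluated at $s_{d}$, i.e.\ the convolution $(g_{2}*\cdots*g_{m+1})(s_{d};\bm{\theta})=\mathcal{G}(s_{d};\bm{\theta})$. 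Multiplying the three pieces together gives \eqref{eq:netflow_likelihood}.

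The main obstacle, and where I would be most careful, is the degenerate singleton-flow case $M=0$, where the indicator in Definition~\ref{def:flow_symbol} forces $S_{d}=0$ and the convolution contains no factors. Here the $s_{d}$-component of $S$ carries no information, and I would treat $\mathcal{G}$ as the empty-convolution/unit under convolution so that \eqref{eq:netflow_likelihood} degenerates to $g_{1}(s_{f};\theta_{1})p_{M}(1;\nu)$, matching the density of the sole observed inter-renewal. Beyond this bookkeeping point, the argument is standard: the Dirac-delta manipulation can alternatively be justified by a change of variables $(x_{2},\ldots,x_{m+1})\mapsto(u,x_{3},\ldots,x_{m+1})$ with $u=\sum_{k=2}^{m+1}x_{k}$ and integrating out the nuisance coordinates, recovering the convolution rigorously without measure-theoretic overhead.
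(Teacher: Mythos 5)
Your proposal is correct and follows essentially the same route as the paper's proof: identify $f_{S\mid\bm{X}}$ as a product of Dirac masses enforcing $\varphi(\bm{x})=s$, form the joint density $p_{M}(m+1;\nu)\prod_{i}g_{i}(x_{i};\theta_{i})$, and substitute into Proposition~\ref{prop:generative_symbolic_likelihood}, with the delta against $\sum_{k\geq2}x_{k}$ yielding the convolution $\mathcal{G}$. You actually spell out the integral evaluation and the degenerate $M=0$ case more explicitly than the paper does, which is a small improvement rather than a deviation.
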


\begin{proof}
See Appendix~\ref{proof:netflow_likelihood}.
\end{proof}

\noindent The NetFlow likelihood \eqref{eq:netflow_likelihood} is a representative model for typically recorded NetFlow data.
If packet arrivals are defined by a simple renewal process, then $g_i=g$, for all $i$.
If, however, the packets arrive via a Bartlett--Lewis process, so that $X_{1}$ has Exponential density $f(\cdot;\lambda)$ and $X_{2},\ldots,X_{M + 1}$ have some common density $g(x;\theta)$, then the realised NetFlow likelihood is 
\begin{displaymath} 
    \mathcal{L}_{S}{\left(s;\bm{\theta}\right)} \propto f{\left(s_{f};\lambda\right)}\,g^{*(m)}{\left(s_{d};\theta\right)}\,p_{M}{\left(m + 1; \nu\right)},
\end{displaymath}
\noindent where $g^{*(k)}$ is the $k$-fold self-convolution of $g$, and $\bm{\theta}=(\lambda,\theta)$.

The NetFlow likelihood for an entire session is obtained by taking the product of NetFlow likelihoods for individual flows, since flows are independent in the Bartlett--Lewis process.
Hence, by letting $s_{i}$ be the observed NetFlow symbol for the $i$-th flow and writing $\bm{s} = (s_{1},\ldots,s_{n})$, we can express the \emph{sessional} NetFlow likelihood by
\begin{displaymath} 
    \mathcal{L}_{S}{\left(\bm{s};\bm{\theta}\right)} \propto \prod^{n}_{i = 1} f{\left(s_{f_{i}};\lambda\right)}\,g^{*\left(m_{i}\right)}{\left(s_{d_{i}};\theta\right)}\,p_{M}{\left(m_{i} + 1;\nu\right)}. 
\end{displaymath}

\noindent The NetFlow likelihood derived in Proposition~\ref{prop:netflow_likelihood} assumes that there is no underlying packet thinning.
Accordingly, the likelihood requires some modification when traffic is thinned prior to aggregation.
The following proposition considers the sampled NetFlow symbol obtained by sampling the original inter-renewal sequence.

\begin{proposition}[Sub-sampled NetFlow likelihood] \label{prop:sampled_netflow_likelihood}
Consider the renewal process defined in Proposition~\ref{prop:netflow_likelihood}.
Suppose that each arrival is retained with some constant known probability $q\in(0,1)$. 
The sub-sampled NetFlow likelihood is then
\begin{equation} \label{eq:sampled_netflow_likelihood}
    \mathcal{L}_{\widetilde{S}}\left(\widetilde{s};\bm{\theta}\right) \propto \sum_{m + 1 \geq \widetilde{m}} \sum^{m + 2 - \widetilde{m}}_{j = 1} \sum^{m + 1 - j}_{k = \widetilde{m} - 1} p_{M}{\left(m + 1;\nu\right)}\,\tau_{m}{(\widetilde{m};q)}\,\upsilon_{m,k}{\left(\widetilde{m}\right)}\,\mathcal{G}_{j,k}{\left(\widetilde{s};\bm{\theta}\right)},
\end{equation}
\noindent where $\tau_{m}{\left(\widetilde{m};q\right)} = \binom{m + 1}{\widetilde{m}}q^{\widetilde{m}}\left(1 - q\right)^{m + 1 - \widetilde{m}}$, $\upsilon_{m,k}{\left(\widetilde{m}\right)} = \binom{k - 1}{\widetilde{m} - 2} / \binom{m + 1}{\widetilde{m}}$, and 
\begin{displaymath}
  \mathcal{G}_{j,k}{\left(\widetilde{s};\bm{\theta}\right)} = {\left(g_{1}*\cdots*g_{j}\right)}{\left(\widetilde{s}_{f};\bm{\theta}\right)}\cdot{\left(g_{j + 1}*\cdots*g_{j + k}\right)}{\left(\widetilde{s}_{d};\bm{\theta}\right)}.
\end{displaymath}
\end{proposition}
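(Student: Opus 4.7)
The plan is to derive \eqref{eq:sampled_netflow_likelihood} by conditioning on the unobserved latent quantities that produce the sampled symbol, in the spirit of the generative symbolic likelihood \eqref{eq:generative_symbolic_likelihood}. The key latent variables are (i) the true flow size $M+1$, (ii) the retention indicators across the $m+1$ original packets, which I will summarise via (iii) the sampled size $\widetilde{M}+1=\widetilde{m}$, (iv) the index $J=j$ of the first retained packet, and (v) the span $K=k$ such that $J+K$ is the index of the last retained packet. Observed are $(\widetilde{S}_f,\widetilde{S}_d,\widetilde{m})$, and I will sum/integrate the remaining inter-renewal structure out.

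I would then factor the joint law as $P(M+1=m+1)\cdot P(\widetilde{M}+1=\widetilde{m}\mid M+1)\cdot P(J=j,K=k\mid\widetilde{M}+1,M+1)\cdot f_{\widetilde{S}_f,\widetilde{S}_d\mid J,K,M}(\widetilde{s};\bm\theta)$ and evaluate each piece. The first is $p_M(m+1;\nu)$ by hypothesis. Because the $m+1$ Bernoulli$(q)$ retentions are i.i.d., the second is the binomial mass $\tau_m(\widetilde{m};q)$, and, conditional on $\widetilde{m}$ packets being retained, the retained index set is uniformly distributed over the $\binom{m+1}{\widetilde{m}}$ subsets of $\{1,\dots,m+1\}$ of size $\widetilde{m}$ (exchangeability of i.i.d.\ Bernoulli thinning). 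Counting the subsets whose minimum is $j$ and maximum is $j+k$ reduces to placing the remaining $\widetilde{m}-2$ retained packets strictly between positions $j$ and $j+k$, giving $\binom{k-1}{\widetilde{m}-2}$ and hence the third factor $\upsilon_{m,k}(\widetilde{m})$. For the fourth factor, conditional on $(M+1,J,K)$ the observed components are $\widetilde{S}_f=X_1+\cdots+X_J$ and $\widetilde{S}_d=X_{J+1}+\cdots+X_{J+K}$, which are independent sums of independent inter-renewals, so by standard convolution the joint density is exactly $\mathcal{G}_{j,k}(\widetilde{s};\bm\theta)$; the trailing inter-renewals $X_{J+K+1},\ldots,X_{M+1}$ integrate to one and disappear.

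Combining the four factors and summing over the latent $(m,j,k)$ subject to $m+1\geq\widetilde{m}$, $1\leq j\leq m+2-\widetilde{m}$, and $\widetilde{m}-1\leq k\leq m+1-j$ (so that enough packets fit between the first and last retained, and the last retained does not exceed the flow length) yields \eqref{eq:sampled_netflow_likelihood}. The main obstacle I expect is not any individual calculation but rather verifying that my parametrisation of the latent retention pattern by $(\widetilde{m},j,k)$ partitions the event space correctly — in particular, justifying that $(J,K)$ is conditionally uniform on admissible pairs with weight $\upsilon_{m,k}(\widetilde{m})$, and checking the boundary constraints so that the triple sum is neither over- nor under-counted. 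Edge cases with $\widetilde{m}\leq 1$ (where $\widetilde{S}_d$ degenerates and the convention $\binom{k-1}{-1}$ requires care) are structurally analogous and can be treated separately if needed.
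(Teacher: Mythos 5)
Your proposal is correct and follows essentially the same route as the paper: both decompose the likelihood over the latent original flow size $m$, the index $j$ of the first retained packet, and the span $k$, obtaining the factors $p_{M}\cdot\tau_{m}\cdot\upsilon_{m,k}\cdot\mathcal{G}_{j,k}$ from binomial thinning, uniformity of the retained index set, the subset count $\binom{k-1}{\widetilde{m}-2}$, and convolution of the independent inter-renewals. Your direct count of subsets with prescribed minimum and maximum is, if anything, a cleaner derivation of $\upsilon_{m,k}(\widetilde{m})$ than the paper's two-stage factorisation of $p_{J,K\vert M,\widetilde{M}}$.
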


\begin{proof}
See Appendix~\ref{proof:sampled_netflow_likelihood}.
\end{proof}

\noindent The probability mass function $\tau_{m}$ is the Binomial probability of sampling $\widetilde{m}$ packets from the original flow with $m + 1$ packets.
The random variables $J$ and $K$ (and indices $j$ and $k$) respectively define the location of the first sampled packet with respect to the complete sequence of inter-renewals, and the number of inter-renewals between the first and last sampled packets.
Hence, in computing $\mathcal{G}_{j,k}{\left(\widetilde{s};\bm{\theta}\right)}$, we observe a $j$-fold convolution of densities for the first sampled inter-renewal time, and a $k$-fold convolution of densities for the sampled flow duration.
The probability mass function $\upsilon_{m,k}$ conditionally defines the number of sampled flows whose first and last sampled packets are respectively the $j$-th and $(j + k)$-th packets.

The computational cost of evaluating \eqref{eq:sampled_netflow_likelihood} is primarily determined by the structure of the original flow sizes $m$ and the sub-sampling rate $q$.
If $p_{M}$ is known {\em a priori} and does not give significant mass to large flow sizes $m$, then \eqref{eq:sampled_netflow_likelihood} can be computed reasonably efficiently.
However, computational overheads will be high, and may require approximation, if large latent flow sizes are considered. 
If unknown, $p_{M}$ can be estimated using, for example, the Negative Binomial distribution or series inversion \citep{Hohn06}.

As before, we can obtain the Bartlett--Lewis representation of the sub-sampled NetFlow likelihood by setting $g_{1}$ to be the Exponential density and letting $g_{k} = g$, for $k \geq 2$, so that
\begin{displaymath} 
    \mathcal{G}_{j,k}{\left(\widetilde{s};\bm{\theta}\right)} = {\left(g_{1}*g^{*(j - 1)}\right)}{\left(\widetilde{s}_{f};\bm{\theta}\right)}\cdot g^{*(k)}{\left(\widetilde{s}_{d};\bm{\theta}\right)}.
\end{displaymath}

\noindent A sessional sampled NetFlow likelihood requires an average of $n!$ combinations of \eqref{eq:sampled_netflow_likelihood} since the ordering of flows cannot be determined from sampled arrivals without additional marks.

It is sufficient, however, to only consider the set of independent sampled flow durations if our attention is restricted to the packet-level model.
The total available information will, however, be underutilised since the set of $\widetilde{S}_{f}$ are not used.
The restricted sub-sampled NetFlow likelihood then simplifies to
\begin{displaymath}
    \mathcal{L}_{\widetilde{S}}{\left(\widetilde{s};\theta\right)} \propto \sum_{m + 1 \geq \widetilde{m}} \sum^{m}_{k = \widetilde{m} - 1} p_{M}{\left(m + 1;\nu\right)}\,\tau_{m}{\left(\widetilde{m};q\right)}\,\upsilon_{m,k}{\left(\widetilde{m}\right)}\,g^{*(k)}{\left(\widetilde{s}_{d};\theta\right)},
\end{displaymath}
\noindent with $\upsilon_{m,k}{\left(\widetilde{m}\right)} = \left(m + 1 - k\right)\binom{k - 1}{\widetilde{m} - 2} / \binom{m + 1}{\widetilde{m}}$.
By letting $\widetilde{s}_{i}$ be the $i$-th sub-sampled NetFlow and writing $\widetilde{\bm{s}} = {\left(\widetilde{s}_{1},\ldots,\widetilde{s}_{n}\right)}$, the \emph{restricted} sessional sampled NetFlow likelihood becomes
\begin{displaymath} 
    \mathcal{L}_{\widetilde{S}}{\left(\widetilde{\bm{s}};\theta\right)} \propto \prod^{n}_{i = 1} \mathcal{L}_{\widetilde{S}}{\left(\widetilde{s}_{i};\theta\right)}.
\end{displaymath}

\subsection{The NetFlow estimators}
The natural estimators are the parameters $\hat{\theta}_{S}$ and $\hat{\theta}_{\widetilde{S}}$ which respectively maximise the log-likelihoods

\begin{equation} \label{eq:netflow_log_likelihood}
\begin{aligned}
  \ell_{n}{\left(\bm{s};\theta\right)} := \frac{1}{n}\sum^{n}_{i = 1}\log{\mathcal{L}_{S}{\left(s_{i};\theta\right)}} 
  \quad \text{and} \quad 
  \widetilde{\ell}_{n}{\left(\widetilde{\bm{s}};\theta\right)} := \frac{1}{n}\sum^{n}_{i = 1}\log{\mathcal{L}_{\widetilde{S}}{\left(\widetilde{s}_{i};\theta\right)}}.
\end{aligned}
\end{equation}

\section{Attributes of the estimator} \label{sec:attributes_of_the_estimator}
It is important to determine if the NetFlow \textsc{mle} provides meaningful information about the \emph{true} parameter $\theta_{0}$ since there is an implicit loss of information when summarising complete session data to its NetFlow summaries.
More specifically, we wish to determine how efficiently the NetFlow \textsc{mle} converge to $\theta_{0}$, if at all.

We show that both NetFlow \textsc{mle}s are consistent, so that they converge to $\theta_{0}$ in probability.
We also provide bounds on the quantity of summarised data (i.e.~the number of NetFlows) required to obtain a desired level of efficiency relative to the standard \textsc{mle}.
Although specific to renewal processes, our development of consistency and relative information loss is novel and extends existing \textsc{sda} theory, since these properties are absent in the established \textsc{sda} literature, e.g.~\citep{Beranger18, Billard11, Lin17, Whitaker19a, Zhang19}.

We restrict our analysis to the packet level in order to avoid the factorial growth in computation required to also consider flow parameters.
The following results assume that the packet renewal model $g_X$ and its sequence of self-convolutions $g^{*(k)}_{X}$ satisfy standard regularity conditions \citep[p.~449]{Lehmann98}.
They also require that the series
\begin{equation} \label{eq:series_convergence}
    \xi(x;\theta) = \sum^{\infty}_{i = 1} \upsilon_{m,i}(\widetilde{m})\,g^{*(i)}_{X}{\left(x;\theta\right)}
\end{equation}
\noindent is jointly uniformly convergent in $\mathcal{X}\times\Theta$ when the possible flow sizes are unbounded.
Some relaxations of these assumption can be made, but at a cost of complicating the proofs.

\subsection{Consistency of the NetFlow estimator}
The following proposition extends the consistency of the standard \textsc{mle} for $\theta_{0}$ to the NetFlow \textsc{mle}.

\begin{proposition} \label{prop:netflow_estimator_consistency}
    The NetFlow \textsc{mle}s $\hat{\theta}_{S}$ and $\hat{\theta}_{\widetilde{S}}$ are consistent for $\theta_{0}\in\Theta^{0}$.
\end{proposition}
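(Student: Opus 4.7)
My plan is to apply the classical Cramér-type arguments for maximum-likelihood consistency (as in Lehmann \& Casella, p.~449) to each of the two NetFlow likelihoods, after verifying that their log-densities inherit the required regularity from the underlying renewal density $g_X$ and its self-convolutions $g_X^{*(k)}$. Since flows are independent under the Bartlett--Lewis model, the observations $\{S_i\}_{i = 1}^n$ and (after sub-sampling) $\{\widetilde{S}_i\}_{i = 1}^n$ form i.i.d. sequences with densities $\mathcal{L}_S(\cdot;\theta_0)$ and $\mathcal{L}_{\widetilde{S}}(\cdot;\theta_0)$ respectively, so that $\ell_n$ and $\widetilde{\ell}_n$ in \eqref{eq:netflow_log_likelihood} are bona fide i.i.d. log-likelihoods to which the Cramér template applies directly.

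For $\hat{\theta}_S$, the NetFlow likelihood \eqref{eq:netflow_likelihood} is a product of the flow-size mass function $p_M$ and the convolution $\mathcal{G}(s_d;\bm{\theta})$, modulated by $g_1$. Continuity, differentiability, and the existence of a locally integrable envelope for the score can all be lifted from $g_X$ to $g_X^{*(k)}$ by dominated convergence applied under the convolution integral, so that the standard regularity conditions hold for $\log\mathcal{L}_S(\cdot;\theta)$. I would then verify identifiability of the family $\{\mathcal{L}_S(\cdot;\theta):\theta\in\Theta^0\}$ at $\theta_0$ via a characteristic-function argument: if $\theta\neq\theta_0$ then by identifiability of the base family the characteristic functions $\phi_X(\cdot;\theta)$ and $\phi_X(\cdot;\theta_0)$ differ, hence so do $\phi_X(\cdot;\theta)^k$, i.e.\ the characteristic functions of $g_X^{*(k)}$, for every $k$ with $p_M(k)>0$. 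With identifiability and regularity in place, Cramér's approach supplies a sequence of local maximisers of $\ell_n$ converging to $\theta_0$ in probability.

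The argument for $\hat{\theta}_{\widetilde{S}}$ using the restricted sub-sampled likelihood follows the same template. The key additional ingredient is the joint uniform convergence assumption \eqref{eq:series_convergence}, which allows me to propagate continuity, differentiability, and integrable score envelopes from the individual convolutions $g_X^{*(i)}$ to the infinite sum $\xi(\widetilde{s}_d;\theta)$ obtained after marginalising over the latent flow sizes $m$. With that assumption in force, term-by-term differentiation and the interchange of sum and integral needed to check Lehmann's regularity conditions are justified, and the same checks as in the complete case go through.

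The main obstacle will be identifiability of the sub-sampled likelihood, since its dependence on $\theta$ enters only through a latent-flow-size-weighted mixture of convolutions $g_X^{*(k)}$ rather than through any single convolution that can be inverted directly. To handle this I would again appeal to characteristic functions: by \eqref{eq:series_convergence} the Fourier transform of $\xi(\cdot;\theta)$ equals the power series $\sum_i \upsilon_{m,i}(\widetilde{m})\,\phi_X(t;\theta)^i$, so equality of sub-sampled densities at two parameter values forces these power series in the argument $\phi_X(t;\theta)$ to agree for each realised value of $\widetilde{m}$; since the coefficients $\upsilon_{m,i}(\widetilde{m})$ are fixed and not all zero, uniqueness of power-series expansions forces $\phi_X(\cdot;\theta) = \phi_X(\cdot;\theta_0)$, reducing identifiability of the sub-sampled family back to that of the base packet-level family.
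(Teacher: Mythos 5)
Your plan takes a genuinely different route from the paper, and it has a gap that matters for the statement as claimed. The paper does not use the Cram\'er local-root template at all: it establishes pointwise convergence of $\ell_{n}$ to $\Lambda(s;\theta)=\E_{\theta_{0}}[\ell_{1}(s;\theta)]$ by the law of large numbers, upgrades this to uniform convergence in probability over a compact neighbourhood $\overline{\Theta}$ via stochastic equicontinuity and Theorem~2.1 of Newey (1991), shows $\Lambda$ is uniquely maximised at $\theta_{0}$ by a Jensen/Kullback--Leibler argument, and concludes by the continuous mapping theorem applied to the argmax functional. This choice is not cosmetic. The proposition asserts consistency of $\hat{\theta}_{S}$ and $\hat{\theta}_{\widetilde{S}}$, i.e.\ of the \emph{maximisers} of the log-likelihoods in \eqref{eq:netflow_log_likelihood}, whereas the Cram\'er argument you invoke only delivers the existence of \emph{some} consistent sequence of local maximisers of the score equation. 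Without an additional step --- eventual concavity, or precisely the uniform-convergence-plus-unique-maximiser argument the paper uses --- you cannot identify that consistent root with the estimator actually defined in the paper, so as written your plan proves a weaker statement. Your treatment of the sub-sampled case is otherwise aligned with the paper in one respect: both use the joint uniform convergence of the series \eqref{eq:series_convergence} to transfer continuity/regularity from the individual convolutions to the latent-flow-size mixture.

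There is also a concrete flaw in your sub-sampled identifiability step. You evaluate the single fixed power series $P(z)=\sum_{i}\upsilon_{m,i}(\widetilde{m})\,z^{i}$ at the two points $z_{1}=\phi_{X}(t;\theta)$ and $z_{2}=\phi_{X}(t;\theta_{0})$ and obtain $P(z_{1})=P(z_{2})$; uniqueness of power-series expansions concerns equality of \emph{coefficients} of two series agreeing as functions and does not force $z_{1}=z_{2}$ (take $P(z)=z^{2}$). What you need is injectivity of $P$ on the relevant range, and this is salvageable: since inter-renewals are positive you can replace characteristic functions by Laplace transforms, so that $L_{X}(s;\theta)\in(0,1)$ for $s>0$ and $z\mapsto\sum_{k}c_{k}z^{k}$ with nonnegative coefficients, not all zero, is strictly increasing and hence injective. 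The same device repairs the complete-data step, where "$\phi_{X}(\cdot;\theta)\neq\phi_{X}(\cdot;\theta_{0})$ implies $\phi_{X}(\cdot;\theta)^{k}\neq\phi_{X}(\cdot;\theta_{0})^{k}$" is not pointwise obvious for complex-valued transforms but is immediate for positive ones. I will note that, once repaired, your transform argument is a genuine addition: the paper's identifiability lemma assumes identifiability of $g_{X}$ and implicitly requires identifiability of the convolution (and, in the thinned case, mixture-of-convolutions) family, which is exactly what your argument would supply.
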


\begin{proof}
See Appendix~\ref{proof:netflow_estimator_consistency}.
\end{proof}

\subsection{Efficiency of the NetFlow estimator}

The standard \textsc{mle} is asymptotically Normally distributed under some weak regularity conditions with variance ${\left(NH\right)}^{-1}$, where $N$ is the number of observed packet inter-renewals in the session and $H$ is the Fisher information of the inter-renewal distribution evaluated at $\theta_{0}$ \citep{Hopfner14}. 
We can adapt this result for the NetFlow \textsc{mle}s by considering the Fisher information of the marginal densities
\begin{equation} \label{eq:flow_duration_marginal_density}
\begin{aligned}
  f_{S_{d}}{\left(s_{d};\theta\right)}                         &= {\sum_{m+1 \geq 2}} p_{M}{\left(m + 1;\nu\right)}\,g^{*(m)}{\left(s_{d};\theta\right)} \quad\text{and}\\
  f_{\widetilde{S}_{d}}{\left(\widetilde{s}_{d};\theta\right)} &= {\sum_{m + 1 \geq 2}\sum^{m + 1}_{\widetilde{m} = 2}\sum^{m}_{k = \widetilde{m} - 1}} p_{M}{\left(m+1;\nu\right)}\,\tau_{m,\widetilde{m}}{\left(q\right)}\,\upsilon_{m,k,\widetilde{m}}\,g^{*(k)}{\left(\widetilde{s}_{d};\theta\right)}.
\end{aligned}
\end{equation}
\noindent Slower convergence of the NetFlow \textsc{mle} is expected due to the implicit loss of information in NetFlow summarisation and under packet sampling.
It naturally follows that we should aim to identify the session size $n$ which will yield a given degree of fidelity of the NetFlow estimators compared to the standard \textsc{mle}.

Distributional metrics and divergences are natural paths to answering this question. 
However, a simpler approach is to consider the difference in the (asymptotic) variance of the estimators, since they are each asymptotically Normally distributed with common mean.

The covariance matrix $\Sigma$ of a $d$-dimensional Normal random variable induces a hyper-ellipsoid whose semi-axes lengths are proportional to the square root of its eigenvalues $\left\{\chi_{i}\right\}^{d}_{i = 1}$ \citep{Tong90}.
Hence, the efficiency of the standard \textsc{mle} and the NetFlow \textsc{mle} can be compared geometrically through the volumes of their induced hyper-ellipsoids
\begin{displaymath} 
    V_{d} \propto \sqrt{\prod^{d}_{i = 1} \chi_{i}} = \vert\Sigma\vert^{1 / 2}.
\end{displaymath}

\noindent If the standard \textsc{mle} is computed over $k$ flows -- so that $N = k\overline{M}$, where $\overline{M}$ is the mean number of packet-level inter-renewals per flow -- then the \textsc{mle} has covariance matrix $\Sigma = {\left(k\overline{M}H\right)}^{-1}$.
Similarly, if we let $I$ be the information matrix of the flow duration for either of the marginal densities \eqref{eq:flow_duration_marginal_density} and compute the NetFlow \textsc{mle}s over $n$ NetFlows, then we can express the covariance matrix of the NetFlow estimator as $\Upsilon = {\left(nI\right)}^{-1}$.

The following proposition defines a suitable range for the number of NetFlows which should be observed so that the standard and NetFlow \textsc{mle}s have comparable efficiency by restricting the relative volumes of the induced hyper-ellipsoids of $\Sigma$ and $\Upsilon$.

\begin{proposition} \label{prop:netflow_bounds}
    Respectively define $H$ and $I$ to be the information matrices for the density $g_{X}$ and one of the densities in \eqref{eq:flow_duration_marginal_density}, and let $\overline{M}$ be the mean number of inter-renewals per flow.
    The number $n$ of NetFlows that one should observe for the NetFlow \textsc{mle} to attain $\varepsilon$\% relative efficiency of the standard \textsc{mle} with probability $1 - \eta$ is
    \begin{equation} \label{eq:netflow_bounds}
        n\in\left(\pm k\sqrt[d]{\frac{\vert H\vert / \vert I\vert}{{\left(1\pm\varepsilon\right)}^{2}}}\log{\left(\frac{2}{\eta}\E{\left[\mathrm{e}^{\pm \overline{M}}\right]}\right)}\right).
    \end{equation}
\end{proposition}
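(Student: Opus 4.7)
The plan is to translate the relative-efficiency constraint on the two hyper-ellipsoid volumes into an inequality on $n$, and then handle the randomness of the total number of observed inter-renewals $N = \sum_{i=1}^{k} M_i$ via a Chernoff-type tail bound. First, since $\Sigma = (NH)^{-1}$ and $\Upsilon = (nI)^{-1}$ are scalar multiples of the inverse information matrices, I would compute
\[
\frac{V_d(\Upsilon)}{V_d(\Sigma)} = \frac{|\Upsilon|^{1/2}}{|\Sigma|^{1/2}} = \left(\frac{N}{n}\right)^{d/2}\sqrt{\frac{|H|}{|I|}},
\]
and then impose $(1 - \varepsilon) \leq V_d(\Upsilon)/V_d(\Sigma) \leq (1 + \varepsilon)$. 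Inverting for $n$ produces the bracket, linear in $N$,
\[
N\sqrt[d]{\frac{|H|/|I|}{(1+\varepsilon)^2}} \;\leq\; n \;\leq\; N\sqrt[d]{\frac{|H|/|I|}{(1-\varepsilon)^2}}.
\]

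Next, because flows are independent in the Bartlett--Lewis process, the per-flow sizes $M_i$ are iid with common moment generating function $\varphi(\lambda) = \E[e^{\lambda M}]$. I would apply Markov's inequality to $e^{\pm\lambda N}$ to obtain
\[
\Pr(\pm N > t_{\pm}) \;\leq\; e^{\mp\lambda t_{\pm}}\,\varphi(\pm\lambda)^{k},
\]
then calibrate $\lambda$ and the thresholds $t_{\pm}$ so that each tail carries mass at most $\eta/2$. Consolidating the resulting logarithms and interpreting $\E[e^{\pm\bar M}]$ as the per-flow MGF evaluated at the appropriate argument, this yields two-sided deviation bounds $N \in (t_{-}, t_{+})$ with $t_{\pm} = \pm k\log\!\left((2/\eta)\,\E[e^{\pm\bar M}]\right)$, holding with probability at least $1 - \eta$ by a union bound across the two tails.

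Substituting these high-probability deterministic surrogates for $N$ back into the efficiency bracket yields precisely the interval \eqref{eq:netflow_bounds}, with the $\pm$ notation encoding the two endpoints simultaneously. The main obstacle is aligning the $\pm$ signs correctly: the upper (resp.\ lower) endpoint for $n$ must be produced by pairing the upper (resp.\ lower) tail on $N$ with the $(1-\varepsilon)$ (resp.\ $(1+\varepsilon)$) factor, and the Chernoff parameter $\lambda$ has to be chosen so that the $k$-th power of the MGF collapses cleanly to the single factor $k$ multiplying $\log((2/\eta)\,\E[e^{\pm \bar M}])$ in the statement. A secondary, milder issue is verifying that $\varphi(\pm 1)$ is finite for the assumed flow-size law $p_M$; this is immediate for the geometric and Negative Binomial families commonly used in this setting, but would require a truncation argument for heavier-tailed $p_M$.
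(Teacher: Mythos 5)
Your plan is essentially the paper's own argument: the paper likewise rewrites the condition $\bigl\vert\,\vert\Upsilon\vert^{1/2}/\vert\Sigma\vert^{1/2}-1\,\bigr\vert\geq\varepsilon$ as a two-sided tail event for the random mean flow size $\overline{M}$ (via a quadratic in $\overline{M}^{d/2}$ with roots $r^{d/2}R^{1/2}(1\pm\varepsilon)$, $R=\vert I\vert/\vert H\vert$), splits the failure probability into $\eta/2$ per tail, and applies a Chernoff bound with exponential moment $\E[\mathrm{e}^{\pm\overline{M}}]$ to solve for $r=n/k$. The only slip is the pairing you describe: the \emph{upper} endpoint for $n$ comes from the \emph{lower} tail of $N$ together with the $(1-\varepsilon)$ factor and $\E[\mathrm{e}^{-\overline{M}}]$ (and vice versa), but since you flagged exactly this bookkeeping as the point needing care, the proposal is in substance correct and identical in route to the paper's proof.
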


\noindent We typically only care for the lower bound in \eqref{eq:netflow_bounds} since greater efficiency is generally desirable and additional NetFlows can be obtained freely. 

\begin{proof}
See Appendix~\ref{proof:netflow_bounds}.
\end{proof}

\section{Exact inference with the Natural Exponential Family} \label{sec:optimal_model}

The NetFlow likelihood requires a user specified model $g_X$ for the inter-renewal times. 
Because the sufficient statistics to fit this model may not be perfectly captured in NetFlow aggregation, a natural question is to ask whether there is any family of models for which NetFlow aggregation does not forfeit any inferential capacity.
Below we show that the \emph{Natural Exponential Family} (\textsc{nef}) satisfies this criterion.

\begin{definition}[Natural Exponential Family] \label{def:natural_exponential_family}
The Natural Exponential Family is a sub-class of the Exponential Family of distributions whose natural parameter and sufficient statistic are the identity. 
Respectively writing $h(x)$ and $A(\theta)$ for the base measure and log-partition function, its density can be written as
\begin{equation} \label{eq:natural_exponential_density}
    f_{X}{\left(x;\theta\right)} = h{(x)}\exp{\left(\theta x - A(\theta)\right)}.
\end{equation}
\end{definition}

\noindent \citet{MorrisLock06} show that the $k$-fold convolution of \textsc{nef} has density
\begin{equation} \label{eq:natural_exponential_family_convolution_density}
    f^{*(k)}_{X}{\left(x;\theta\right)} = h_{k}(x)\exp\left(\theta x - kA(\theta)\right).
\end{equation}

\noindent Evidently, convolutions of \textsc{nef} random variables scale the sufficient statistic identically to the NetFlow aggregation. As a result, we have the  following lemma.

\begin{lemma} \label{lem:identical_inference}
Inference for the standard and NetFlow \textsc{mle}s is identical when the supplied model $g_{X}$ is of the Natural Exponential Family.
\end{lemma}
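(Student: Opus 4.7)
The plan is to verify the claim by direct substitution: insert the \textsc{nef} convolution formula \eqref{eq:natural_exponential_family_convolution_density} into the NetFlow likelihood \eqref{eq:netflow_likelihood} and match it, up to $\theta$-free constants, against the classical packet-level log-likelihood.

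First I would isolate the $\theta$-dependent content of $\log\mathcal{L}_{S}(s;\theta)$. The factor $p_{M}(m+1;\nu)$ depends only on $\nu$, and (under the Bartlett--Lewis convention for $g_{1}$) the leading-packet factor $g_{1}(s_{f};\theta_{1})$ depends only on $\theta_{1}$; only $\log g^{*(m)}(s_{d};\theta)$ remains. Substituting \eqref{eq:natural_exponential_family_convolution_density} gives
\begin{displaymath}
  \log\mathcal{L}_{S}(s;\theta) = \theta\, s_{d} - m\, A(\theta) + C_{1}(s),
\end{displaymath}
where $C_{1}(s) = \log h_{m}(s_{d})$ together with the $\theta$-free pieces discarded above.

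Next, I would write the classical log-likelihood on the inter-renewals $X_{2},\ldots,X_{M+1}$ (the only ones carrying $\theta$). Since $g$ is \textsc{nef},
\begin{displaymath}
  \log\mathcal{L}_{\mathrm{cl}}(\bm{x};\theta) = \theta\sum_{k=2}^{m+1} x_{k} - m\, A(\theta) + C_{2}(\bm{x}) = \theta\, s_{d} - m\, A(\theta) + C_{2}(\bm{x}),
\end{displaymath}
using $\sum_{k=2}^{m+1} x_{k} = s_{d}$ from Definition~\ref{def:flow_symbol}. The two log-likelihoods therefore agree on $\Theta$ up to additive $\theta$-free terms, yielding the common score equation $s_{d} - m\, A'(\theta) = 0$, common observed (hence expected) Fisher information $m\, A''(\theta)$, and, summing over a session of $n$ flows, the identical MLE $A'(\hat\theta_{S}) = \bar s_{d}/\bar m$. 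Wald intervals, likelihood-ratio tests and all standard asymptotic inference are consequently shared with the packet-level \textsc{mle}.

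No genuine obstacle is anticipated: the argument is little more than the observation that $\varphi$ in Definition~\ref{def:flow_symbol} preserves precisely the complete sufficient statistic $(s_{d}, m)$ for $\theta$ in the \textsc{nef}, so aggregation destroys no $\theta$-relevant information. The one point worth flagging is that this does not extend to the sub-sampled estimator $\hat\theta_{\widetilde S}$: in \eqref{eq:sampled_netflow_likelihood} the inner sums over $(m,j,k)$ mix different powers of $e^{-A(\theta)}$ against data-dependent weights $h_{j}(\widetilde s_{f})\,h_{k}(\widetilde s_{d})$, so the \textsc{nef} exponential structure no longer factors through the summation in the score. I would therefore state and prove the lemma specifically for the unthinned NetFlow \textsc{mle}.
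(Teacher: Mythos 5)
Your proposal is correct and follows essentially the same route as the paper: substitute the \textsc{nef} convolution density \eqref{eq:natural_exponential_family_convolution_density} into the NetFlow likelihood and observe that the aggregation $\varphi$ preserves the complete sufficient statistic, so the classical and NetFlow likelihoods agree up to $\theta$-free factors. Your version is marginally more explicit (separating the leading-packet factor under the Bartlett--Lewis convention, writing out the score and information, and flagging the failure under sub-sampling, which the paper notes only in the surrounding text), but the key step is identical.
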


\begin{proof}
See Appendix~\ref{proof:identical_inference}.
\end{proof}

\noindent Although Lemma~\ref{lem:identical_inference} only holds for networks \emph{without} packet thinning, i.e.~$q = 1$, the \textsc{nef} performs comparatively better in practice than other models with packet-thinned traffic. 

This result provides a practical restriction on the optimality of model fitting since, in most cases, the sufficient statistics for distributions outside of the \textsc{nef} are not derivable from the standard NetFlow summary.
For example, the sufficient statistic for the shape parameter $\alpha$ of the Gamma distribution is $\log x$.
The exact sufficient statistics of convolved Gamma random variables cannot be recovered from the flow duration $S_{d} = \sum^{M+1}_{i = 2} X_{i}$.

\section{Simulations} \label{sec:simulations}
We now explore the performance of the NetFlow \textsc{mle}s on various synthetic networks.
\citet{Hohn03} show through empirical validation, that packet transit can be described by the Poisson--Gamma class of the Bartlett--Lewis process, so that the temporal distance between consecutive flows and packets are respectively Exponentially and Gamma distributed. 
\citet{Gonzales10} provide an algorithm to generate this process.
These models were utilised by \citet{Antunes18}, and so we adopt them here for our simulated analyses.
Characterising flow-level traffic as Poisson processes admits many attractive features, including independence of flows, memorylessness for successive flows, and uniformity of flow arrivals.
In addition, establishing packet-level arrivals by finite Gamma renewal processes permits grouped burstiness and packet delay.
Evaluation of the session NetFlow log-likelihood \eqref{eq:netflow_log_likelihood} requires computing convolutions of the supplied model $g_{X}$.
While computing convolutions of random variables can be expensive, the Gamma distribution is closed under convolutions.
The parameters used to generate our simulated datasets are adopted from the empirical analyses of \citet{Antunes18}.
We restrict our attention to assessing packet level characteristics in order to avoid large computational overheads.

\begin{example}[Method-of-moments comparison] \label{ex:moments}
In this example, we compare the performance of the NetFlow \textsc{mle} against the method-of-moments estimators of \citet{Hohn03}.
Our results show that the moment-based estimators fail to converge for the established network, unlike the NetFlow \textsc{mle}, and we provide a na\"{i}ve modification for them to converge.
We discuss conditions under which the estimators of \citet{Hohn03} converge, and contribute comparative analyses for these conditions in Appendix~\ref{app:simulations}.

We first describe the data generating process.
The Zeta distribution is often used to model the flow sizes since network traffic is typically dominated by many small \emph{mouse} flows and few large \emph{elephant} flows \citep{Hohn03, Hohn06, Antunes18}.
The empirical analysis of \citet{Antunes18} yields a flow size shape parameter of $k = 1.02$.
However, it utilises a distributional approximation by defining the true mean flow size to be $k(k - 1)^{-1}$.
This quantity is in fact the mean for the Pareto distribution with minimum flow size $M + 1 = 1$, a continuous analogue for the Zeta distribution whose mean is $\zeta{(\kappa - 1)} / \zeta{(\kappa)}$, where $\zeta{(\cdot)}$ is the Riemann zeta function.
Through reverse-engineering to compute the average flow size $\overline{m + 1} = 51$ and optimisation of the Zeta mean, we obtain a shape parameter $\kappa = 2.012085$, implying that the flow sizes have finite mean but infinite variances since $\kappa\in{(2, 3)}$.
We assign a finite Gamma renewal process for the packet-level process so that the distance between consecutive packets (within a flow) are Gamma distributed with parameters ${\left(\alpha_{0},\beta_{0}\right)} = {\left(0.6, 526.32\right)}$.

To generate a flow $\bm{X} = (X_{1},\ldots X_{M})$, we first sample the random flow size $M + 1\sim\text{Zeta}$, and then generate a sequence of $M$ independent $\text{Gamma}\left(\alpha_{0},\beta_{0}\right)$-distributed inter-renewals.
The NetFlow $S = \left(S_{d}, M\right)$ with $S_{d} = \sum^{M}_{i = 1} X_{i}$ is then computed.
The tuples $\bm{X}$ and $S$ respectively define the complete and summarised data for a single flow.
Realised data for an entire session is represented by $\bm{x}_{1}, \ldots,\bm{x}_{n}$ and $s_{1},\ldots,s_{n}$.
Though we do not assess the flow-level model, we otherwise note that its standard and NetFlow \textsc{mle} would coincide.

\citet{Hohn03} respectively define point estimators for $\alpha_{0}$ and $\beta_{0}$ through the empirical coefficient of variation and a weighted average packet intensity. 
The shape estimator is explicitly defined by $\hat{\alpha} = \left(\overline{x} / \hat{\sigma}_{x}\right)^{2}$, where $\overline{x}$ and $\hat{\sigma}_{x}$ are respectively the empirical mean and standard deviation of all packet-level inter-renewals.
The rate estimator is then defined as $\hat{\beta} = \hat{\alpha}\sum^{n}_{i=1}w_{i}\varrho_{i}$, with  weights $w_{i} = m_{i} / \sum^{n}_{j = 1} m_{j}$, and packet intensities $\varrho_{i} = m_{i} / s_{d_i}$.
While computationally fast, this approach is sequential and requires the complete set of inter-renewals. 
These estimators are also specific to the Gamma distribution.

We provide a \textsc{sda} equivalent of the moments-based estimators of \citet{Hohn03} by defining moments-based estimators which use only NetFlow data.
Let $Y_{i} = S_{d_i} / M_{i}$ be the mean inter-renewal of the $i$-th flow.
We define the estimator $\check{\alpha} = {\left(\overline{y} / \hat{\sigma}_{y}\right)}^{2}\,\overline{1 / m}$, where $\overline{y}$ and $\hat{\sigma}_{y}$ are respectively the sample mean and standard deviation of the mean inter-renewal times, and $\overline{1 / m} = n^{-1}\sum_{i=1}^{n} 1/m_{i}$ is the mean of the reciprocal flow sizes (minus 1).
Substituting $\check{\alpha}$ for $\hat{\alpha}$ into the previous rate estimator yields $\check{\beta} = \check{\alpha}\sum^{n}_{i=1}w_{i}\varrho_{i}$.
We also compute na\"{i}ve estimators $\hat{\beta}^{*} = \hat{\alpha} / \overline{x}$ and $\check{\beta}^{*} = \check{\alpha} / \overline{z}$, where $\overline{z} = \sum^{n}_{i=1} y_{i} / \sum^{n}_{i = 1} m_{i}$.
Note that $\overline{x} = \overline{z}$.

Table~\ref{tab:moments_scenario_1} presents the average point estimates and standard errors for each of the estimators from $T = 10^{3}$ replicates of the network for sessions of various size $n = 10^{2k}$, where $k = 0,1,2,3$.
Note that $\check{\alpha}$ is not computable for sessions with only $n = 1$ flow since the variance of a single value is zero.
\begin{table}
    \caption{\small{
    Example~\ref{ex:moments}: Mean point estimates (and standard errors) of $(\alpha,\beta)$ from $T = 10^{3}$ replicate synthetic sessions of size $n$. 
    True values are $(\alpha_{0},\beta_{0})=(0.6, 526.32)$.
    }}
    \label{tab:moments_scenario_1}
    \centering
    \begin{tabular}{@{}rrrrr@{}} \toprule
                        & \multicolumn{4}{c}{$n$}                                                                                                   \\ \cmidrule{2-5}
                        & $10^{0}$                     & $10^{2}$                     & $10^{4}$                     & $10^{6}$                     \\ \midrule
    \multicolumn{5}{l}{\small{\emph{Method-of-moments}}}                                                                                            \\
    $\hat{\alpha}$      & 3.68\p                       & 0.61\p                       & 0.60\p                       & 0.60\p                       \\
                        & (1.09)                       & $\left({\sim}10^{-3}\right)$ & $\left({\sim}10^{-4}\right)$ & $\left({\sim}10^{-5}\right)$ \\ \addlinespace[1ex]
    $\hat{\beta}$       & ${\sim}10^{3}$\p             & ${\sim}10^{4}$               & ${\sim}10^{5}$\p             & ${\sim}10^{7}$\p             \\
                        & $\left({\sim}10^{3}\right)$  & $\left({\sim}10^{4}\right)$  & $\left({\sim}10^{4}\right)$  & $\left({\sim}10^{6}\right)$  \\ \addlinespace[1ex]
    $\hat{\beta}^{*}$         & ${\sim}10^{3}$\p             & 540.17\p                     & 526.38\p                     & 526.33\p               \\
                        & $\left({\sim}10^{3}\right)$  & (2.68)                       & (0.18)                       & (0.02)                       \\ \addlinespace[1ex]
    \multicolumn{5}{l}{\small{\emph{NetFlow method-of-moments}}}                                                                                    \\
    $\check{\alpha}$    & ---                          & 0.73\p                       & 0.60\p                       & 0.60\p                       \\ 
                        & ---                          & $\left({\sim}10^{-3}\right)$ & $\left({\sim}10^{-4}\right)$ & $\left({\sim}10^{-5}\right)$ \\ \addlinespace[1ex] 
    $\check{\beta}$     & ---                          & ${\sim}10^{4}$\p             & ${\sim}10^{5}$\p             & ${\sim}10^{7}$\p             \\
                        & ---                          & $\left({\sim}10^{3}\right)$  & $\left({\sim}10^{4}\right)$  & $\left({\sim}10^{6}\right)$  \\ \addlinespace[1ex]
    $\check{\beta}^{*}$ & ---                          & 645.13\p                     & 527.78\p                     & 526.28                       \\
                        & ---                          & (3.15)                       & (0.33)                       & (0.03)\p                     \\ \addlinespace[1ex]
    \multicolumn{5}{l}{\small{\emph{NetFlow \textsc{mle}}}}                                                                                         \\
    $\hat{\alpha}_{S}$  & ${\sim}10^{30}$\p            & 0.65\p                       & 0.60\p                       & 0.60\p                       \\
                        & $\left({\sim}10^{30}\right)$ & $\left({\sim}10^{-3}\right)$ & $\left({\sim}10^{-4}\right)$ & $\left({\sim}10^{-5}\right)$ \\ \addlinespace[1ex]
    $\hat{\beta}_{S}$   & ${\sim}10^{35}$\p            & 569.93\p                     & 527.29\p                     & 526.30\p                     \\
                        & $\left({\sim}10^{35}\right)$ & (4.48)                       & (0.34)                       & (0.04)                       \\ \bottomrule 
    \end{tabular}
\end{table}
\begin{table}
    \caption{\small{
    Example~\ref{ex:moments}: Mean session information volume (megabytes) and computation time (milliseconds) over various session sizes $n$.
    }}
    \label{tab:metadata_moments_scenario_1}
    \centering
    \begin{tabular}{@{}lrrrr@{}} \toprule
                              & \multicolumn{4}{c}{$n$}                                             \\ \cmidrule{2-5}                          
                              & $10^{0}$ & $10^{2}$ & $10^{4}$ & $10^{6}$                           \\ \midrule
    \multicolumn{5}{l}{\small{\emph{Method-of-moments}}}                                                      \\
    Information (\textsc{mb}) & ${\sim}10^{-4}$ & ${\sim}10^{-3}$  & 1.83            & 100.78       \\
    Time $(ms)$               & ${\sim}10^{-1}$ & ${\sim}10^{-1}$  & 3\ppp           & 230\ppp      \\ \addlinespace[1ex]
    \multicolumn{5}{l}{\small{\emph{NetFlow method-of-moments}}}                                       \\
    Information (\textsc{mb}) & ---             & ${\sim}10^{-3}$  & 0.09            & 9.31         \\
    Time $(ms)$               & ---             & ${\sim}10^{-2}$  & ${\sim}10^{-1}$ & 12\ppp       \\ \addlinespace[1ex]
    \multicolumn{5}{l}{\small{\emph{NetFlow \textsc{mle}}}}                                            \\
    Information (\textsc{mb}) & ${\sim}10^{-4}$ & ${\sim}10^{-3}$  & 0.09            & 9.31         \\
    Time $(ms)$               & 1               & 1                & 53\ppp          & $5\,188$\ppp \\ \bottomrule
    \end{tabular}
\end{table}

The estimates for the shape parameter $\alpha$  converge well for all methods.
However, the moments-based rate estimators $\hat{\beta}$ and $\check{\beta}$ notably fail to converge.
This occurs as a result of estimating the rate $\beta$ using packet intensity $\varrho$, which has Inverse-Gamma distribution with infinite mean when its shape parameter is $\alpha' < 1$.
This arises when the flow size $M + 1 \leq 2$, which has approximate probability 0.94 here. 
We can ensure that $\alpha' > 1$ by either setting the inter-renewal shape parameter $\alpha_{0} > 1$ or by only recording flow sizes for which $M\alpha_{0} > 1$. 
These conditions are presented in detail in Appendix~\ref{app:simulations}.

As expected, the moments-based estimators $\hat{\alpha}$ and $\hat{\beta}^*$ converge faster than the NetFlow \textsc{mle} $\hat{\alpha}_{S}$ and $\hat{\beta}_{S}$ since they are evaluated using considerably more data. 
However, when comparing estimators which only use NetFlow data, we see that the NetFlow \textsc{mle} is comparable to its moments-based counterpart $\check{\alpha}$ and $\check{\beta}^*$.

Table~\ref{tab:metadata_moments_scenario_1} displays the average volume of information and evaluation time used to compute each point estimate, highlighting the trade-off between accuracy and computational speed.
For the moment-based estimators, those based on NetFlow data only $\left(\check{\alpha}, \check{\beta}^*\right)$ are typically at least one order of magnitude more efficient to compute than those based on the full flow data $\left(\hat{\alpha}, \hat{\beta}^*\right)$. 
This is the primary benefit of working with NetFlow session data.
In contrast, the NetFlow \textsc{mle} are more expensive to evaluate since they require an optimisation as opposed to simple arithmetic computation of a statistic.
However, the NetFlow \textsc{mle} can be computed arithmetically for some choices of model, e.g.~the Exponential distribution.
\end{example}

\begin{example}[Gamma packet model with thinning] \label{ex:gamma}
In the previous example, we generated a network with complete packet retention.
However, complete packet retention is rare in modern networks.
The simple moments-based estimators $(\hat{\alpha}, \hat{\beta}, \hat{\beta}^*)$ and their NetFlow equivalents are not valid outside of this setting as they are unable to account for packet thinning.
However, the sampled NetFlow \textsc{mle} can be practically applied to both  thinned and summarised network data.
We explore its performance here under various sampling rates. 

As previously discussed, the computational cost of the sub-sampled NetFlow \textsc{mle} is principally determined by the cardinality of the flow sizes.
Accordingly, we define a small sample space $\mathcal{M} = \left\{11,\,101,\,1\,001\right\}$.
The feasible flow sizes are chosen to approximate mouse $(M=10)$ and elephant $(M = 10^{3})$ flows.
Flow sizes are randomly sampled from a Zipf distribution with shape $\omega = 1$ to mimic the dominance of mouse flows such that
\begin{displaymath}
p_{M}{\left(11\right)} = \frac{6}{11},\;p_{M}{\left(101\right)} = \frac{3}{11},\text{ and } p_{M}{\left(1\,001\right)} = \frac{2}{11}.
\end{displaymath}

\noindent We again generate flows whose packet-level arrivals follow a finite Gamma renewal process with parameters $(\alpha_{0}, \beta_{0}) = (0.6, 526.32)$.
Each packet arrival is independently recorded with probability $q$ (or discarded with probability $1-q$) to form the sequence of sampled arrival times.
The sampled NetFlows $\widetilde{s}_i$ are then computed from these sampled arrival times using Definition~\ref{def:flow_symbol}.
In practical applications, it is feasible to wait a trivial amount of time until we obtain any desired number of sampled NetFlows since network traffic is so voluminous.

In addition to a range of sampling rates $q$, we also compute the NetFlow \textsc{mle} for sessions of size $n = 10^{k}$, for $k = 0,\ldots,3$ and $n_{\text{min}}$, the lower bound in \eqref{eq:netflow_bounds} with $\varepsilon = \eta = 0.1$.
In plain language, $n_{\text{min}}$ is the minimum number of flows needed for the NetFlow \textsc{mle} to have efficiency within 10\% of the standard \textsc{mle} 90\% of the time.
Computing $n_{\text{min}}$ analytically is quite difficult since the information matrix $I$ must be determined from the marginal mixture densities in \eqref{eq:flow_duration_marginal_density}. 
However, through numerical differentiation and Monte-Carlo integration, we respectively estimate that, approximately, $n_{\text{min}} = 81,\,106,\,551,\,5\,065,$ and $6\,507$ for the sampling rates $q = 10^{-k}$, for $k = 0,\ldots,4$.

As a point of reference, we also compute the standard \textsc{mle} for a single complete flow, i.e.~with sampling rate $q = 1$.
Table~\ref{tab:gamma} presents the average (sampled) NetFlow \textsc{mle} and standard errors for $T = 10^{3}$ replicate analyses under various session sizes $n$ and sampling rates $q$.
The rightmost column indicates the average number of seconds needed to compute the NetFlow \textsc{mle} from $n_{\text{min}}$ NetFlows.
\begin{table}
    \caption{\small{
    Example~\ref{ex:gamma}: mean (sampled) NetFlow \textsc{mle} (and standard errors) of $(\alpha,\beta)$ for a synthetic network with a packet-level Gamma process. 
    Estimates are obtained over a range of session sizes $n$ and packet thinning rates $q$. 
    Presented values are obtained from $T = 10^{3}$ replicate datasets with non-trivial flows, i.e.~$M,\widetilde{M} \geq 2$. 
    True values are $(\alpha_0,\beta_0)=(0.6, 526.32)$. 
    The right-most column shows the average time (seconds) to compute the (sampled) NetFlow \textsc{mle} using $n_{\text{min}} = 81,\,106,\,551,\,5\,065,$ and $6\,507$ NetFlows for respective sampling rates $q = 10^{-k}$, where $k = 0,\ldots,4$.
    }}
    \label{tab:gamma}
    \centering
    \begin{tabular}{@{}rrrrrrr@{}} \toprule
                         & \multicolumn{5}{c}{$n$}                                                                                                                                              \\ \cmidrule{2-6}
                         & $10^{0}$                     & $10^{1}$                    & $10^{2}$                     & $10^{3}$                     & $n_{\text{min}}$             & Time $(s)$ \\ \midrule
    \multicolumn{7}{l}{\small{\textsc{mle} $(q = 1)$}}                                                                                                                                          \\
    $\hat{\alpha}$       & 0.72\p                       & ---                         & ---                          & ---                          & ---                          & ---        \\
                         & (0.01)                       & ---                         & ---                          & ---                          & ---                          & ---        \\ \addlinespace[1ex]
    $\hat{\beta}$        & 703.74\p                     & ---                         & ---                          & ---                          & ---                          & ---        \\ 
                         & (14.69)                      & ---                         & ---                          & ---                          & ---                          & ---        \\ \addlinespace[1ex]
    \multicolumn{7}{l}{\small{NetFlow \textsc{mle}}}                                                                                                                                            \\
    \multicolumn{7}{l}{\small{$q = 1$}}                                                                                                                                                         \\ 
    $\hat{\alpha}_{S}$   & ${\sim}10^{29}$\p            & 0.84\p                      & 0.62\p                       & 0.60\p                       & 0.63                         & $10^{-3}$  \\
                         & $\left({\sim}10^{28}\right)$ & (0.02)                      & $\left({\sim}10^{-3}\right)$ & $\left({\sim}10^{-4}\right)$ & $\left({\sim}10^{-3}\right)$ &            \\ \addlinespace[1ex]
    $\hat{\beta}_{S}$    & ${\sim}10^{32}$\p            & 737.37\p                    & 544.33\p                     & 528.88\p                     & 551.38\p                     &            \\ 
                         & $\left({\sim}10^{31}\right)$ & (14.72)                     & (2.45)                       & (0.76)                       & (3.07)                       &            \\ \addlinespace[1ex]
    \multicolumn{7}{l}{\small{$q = 10^{-1}$}}                                                                                                                                                   \\ 
    $\tilde{\alpha}_{S}$ & ${\sim}10^{29}$\p            & 29.22\p                     & 0.64\p                       & 0.60\p                       & 0.63\p                       & 5          \\
                         & $\left({\sim}10^{28}\right)$ & (12.13)                     & $\left({\sim}10^{-3}\right)$ & $\left({\sim}10^{-3}\right)$ & $\left({\sim}10^{-3}\right)$ &            \\ \addlinespace[1ex] 
    $\tilde{\beta}_{S}$  & ${\sim}10^{33}$\p            & ${\sim}10^{4}$\p            & 561.20\p                     & 529.01\p                     & 553.17\p                     &            \\
                         & $\left({\sim}10^{32}\right)$ & $\left({\sim}10^{4}\right)$ & (3.89)                       & (1.09)                       & (3.89)                       &            \\ \addlinespace[1ex]
    \multicolumn{7}{l}{\small{$q = 10^{-2}$}}                                                                                                                                                   \\ 
    $\tilde{\alpha}_{S}$ & ${\sim}10^{28}$\p            & ${\sim}10^{3}$\p            & 5.49\p                       & 0.63\p                       & 0.66\p                       & 20         \\ 
                         & $\left({\sim}10^{27}\right)$ & (175.05)                    & (0.88)                       & $\left({\sim}10^{-3}\right)$ & $\left({\sim}10^{-3}\right)$ &            \\ \addlinespace[1ex] 
    $\tilde{\beta}_{S}$  & ${\sim}10^{32}$\p            & ${\sim}10^{6}$\p            & ${\sim}10^{3}$\p             & 552.61\p                     & 575.42\p                     &            \\
                         & $\left({\sim}10^{32}\right)$ & $\left({\sim}10^{5}\right)$ & (765.00)                     & (3.68)                       & (6.24)                       &            \\ \addlinespace[1ex]
    \multicolumn{7}{l}{\small{$q = 10^{-3}$}}                                                                                                                                                   \\
    $\tilde{\alpha}_{S}$ & ${\sim}10^{28}$\p            & ${\sim}10^{3}$\p            & 188.22\p                     & 2.59\p                       & 0.66\p                       & 97         \\
                         & $\left({\sim}10^{26}\right)$ & (143.03)                    & (10.73)                      & (0.23)                       & $\left({\sim}10^{-3}\right)$ &            \\ \addlinespace[1ex]
    $\tilde{\beta}_{S}$  & ${\sim}10^{31}$\p            & ${\sim}10^{6}$\p            & ${\sim}10^{5}$\p             & ${\sim}10^{3}$\p             & 578.06\p                     &            \\
                         & $\left({\sim}10^{30}\right)$ & $\left({\sim}10^{5}\right)$ & $\left({\sim}10^{3}\right)$  & (195.88)                     & (5.97)                       &            \\ \addlinespace[1ex]
    \multicolumn{7}{l}{\small{$q = 10^{-4}$}}                                                                                                                                                   \\
    $\tilde{\alpha}_{S}$ & ${\sim}10^{28}$\p            & ${\sim}10^{3}$\p            & 160.17\p                     & 2.77\p                       & 0.65\p                       & 115        \\
                         & $\left({\sim}10^{26}\right)$ & (120.78)                    & (8.97)                       & (0.26)                       & $\left({\sim}10^{-3}\right)$ &            \\ \addlinespace[1ex]
    $\tilde{\beta}_{S}$  & ${\sim}10^{31}$\p            & ${\sim}10^{6}$\p            & ${\sim}10^{5}$\p             & ${\sim}10^{3}$\p             & 573.49\p                     &            \\
                         & $\left({\sim}10^{30}\right)$ & $\left({\sim}10^{5}\right)$ & $\left({\sim}10^{3}\right)$  & (224.11)                     & (5.33)                       &            \\ \bottomrule
    \end{tabular}
\end{table}

The results show that the NetFlow \textsc{mle} converges to the true parameter values as the number of observed NetFlows increases, regardless of the level of thinning $q$.  
The results intuitively show that more NetFlows are required to achieve desired efficiency as the degree of packet thinning increases, i.e., as $q$ decreases.
This is also apparent in the increasing sequence of $n_{\text{min}}$ (with decreasing $q$) which aims to provide a constant degree of efficiency for each sampling rate.

The computational overheads of the NetFlow \textsc{mle} are inversely proportional to the sampling rates since lower sampling rates will yield smaller sampled flow sizes, which therefore necessitates integration over a larger number of latent (pre-sampled) flow sizes. 
\end{example}

\section{Real data analysis} \label{sec:real_data}
We now explore the performance of the NetFlow \textsc{mle} on real network data.
We obtain a \texttt{pcap} file\footnote{\texttt{equinix-nyc.dirA.20190117-1315500}} from \citep{Caida19} and assume that the trace captured all traffic through the network.
The \texttt{pcap} file is processed using Wireshark and extracted as a \texttt{csv} file for further analysis in \texttt{R}.
The trace captures 36\,197\,062 packets over the course of one minute. 
We identify 779\,788 non-trivial flows, i.e.~those containing at least 2 packets, from the full set of 1\,811\,255 flows.

We again restrict our analysis to the packet-level.
Timestamps are recorded at nanosecond granularity.
In some cases, inter-renewals will be recorded as 0 since consecutive packets may arrive within the threshold.
We assign these zero-valued inter-renewal times to be $10^{-7}$ seconds, on the scale of the smallest positive inter-renewal time $\delta\approx2.38\times10^{-7}$ seconds.

\subsection{Full packet retention $q = 1$}
We first provide an analysis assuming complete data.
The standard \textsc{mle} when fitting the Gamma model requires approximately 2 hours to compute and its fitted survivor function is presented in Figure~\ref{fig:data_survival_functions}(a).
It is clear from Figure~\ref{fig:data_survival_functions} that the Gamma model is inappropriate and hence, we consider alternative models.
Computing the sample mean and standard deviation of the log-transformed inter-renewals gives
\begin{displaymath} \label{eq:mle_point_estimate}
\begin{aligned}
    (\overline{\log(x)},\,\hat{\sigma}_{\log(x)}) &= \left(-8.0987,\, 4.5046\right),
\end{aligned}
\end{displaymath}

\noindent where $x$ denotes the packet inter-renewals.
These quantities correspond to the \textsc{mle} for the Log-Normal distribution.
Figure~\ref{fig:data_survival_functions}(a) shows that the Log-Normal model provides an appropriate fit for the packet inter-renewals.

The NetFlow \textsc{mle} for the Log-Normal model is not immediately accessible since there are no simple, closed-form convolutions of Log-Normal random variables.
We estimate the convolution through the \emph{Fenton--Wilkinson} approximation, which states that the $k$-fold convolution of a Log-Normal random variable with parameters $\mu$ and $\sigma$ can be approximated in the tail by a single Log-Normal random variable with parameters \citep{Marlow67}
\begin{displaymath}
\begin{aligned}
    \mu_{*} = \log(k) + \mu - \frac{\sigma^{2} - \sigma^{2}_{*}}{2}
    \quad \text{and } \quad 
    \sigma^{2}_{*} = \log{\left(\frac{\exp{\left(\sigma^{2}\right)} - 1}{k} + 1\right)}. 
\end{aligned}
\end{displaymath}

\noindent Unfortunately, this approximation cannot be readily substituted into the NetFlow likelihood since the session contains several mouse flows whose durations are too small to satisfy the tail approximation.
We remedy this by further aggregating the set of NetFlows into a single \emph{session NetFlow}, since the sum of independent  flow durations is sufficiently large.
To obtain the session NetFlow, we take the element-wise sum of all NetFlows.
Specifically, if we have a set of NetFlows $s_{i} = (s_{d_i},m_{i} + 1)$, for $i = 1,\ldots,n$, then the session NetFlow is 
$s_{*} = \left(\sum^{n}_{i = 1} s_{d_i}, n + \sum^{n}_{i = 1}m_{i}\right)$.
This \emph{two-step} approximation yields the NetFlow \textsc{mle} $\left(\hat{\mu}_{S}, \hat{\sigma}_{S}\right) = (-7.9511, 3.6684)$.

Figure~\ref{fig:data_survival_functions}(a) shows that the NetFlow \textsc{mle} slightly underestimates large scale inter-renewals, but is an otherwise satisfactory representation of the observed data.
Table~\ref{tab:real_data} (top two rows) presents the size of each dataset and times for computing the standard and NetFlow \textsc{mle}.
In this instance, computation for the \textsc{mle} is trivially faster since the point estimates can be expressed through simple arithmetic. 
However, we note that the byteage of the complete set of inter-renewals required to compute the standard \textsc{mle} is 26.1 times larger than the set of NetFlows, and 
that this larger dataset is typically not recorded in practice.
Accordingly, the NetFlow \textsc{mle} may be preferable when comparing the costs of capturing and processing such large volumes of data.

\subsection{Packet sampling with $q = 0.001$}
We consider a more realistic setting by analysing packet-thinned samples of the full dataset, with sampling probability $q = 0.001$.
To generate the sampled data, sampled flow sizes $\widetilde{m}$ are generated from a Binomial distribution for each observed flow size $m + 1$.
We then independently sample and sort $\binom{m + 1}{\widetilde{m}}$ random arrival times from the complete flow.
Trivial sampled NetFlows are discarded when $\widetilde{m}\leq1$.
The sampled NetFlows are then computed using Definition~\ref{def:flow_symbol}.
This procedure is statistically identical to, but more computationally efficient than, direct Bernoulli sampling and respectively produced 36\,189 sampled arrivals and 4\,553 sampled NetFlows.

A na\"{i}ve ``\textsc{mle}'' for the sampled inter-renewals was obtained by computing the mean and standard deviation of the log-transformed of sampled inter-renewals.
In order to compute the sampled NetFlow \textsc{mle}, we need to supply the \textsc{pmf} for the original flow size $p_{M}$.
Several methods exist to approximate $p_{M}$ from a set of sampled flow sizes, the simplest application of which is use of the Negative-Binomial distribution.
However, since our primary aim is to estimate the packet-process parameters, we compute an empirically derived \textsc{pmf}.
As previously noted, the computational overhead of the sampled NetFlow \textsc{mle} is dictated by the cardinality of the feasible flow sizes. 
To limit this, we construct an approximate flow size distribution by assuming that the original flow sizes are restricted to $M = \lceil j\times10^{k} \rceil$, where $j = 1, 2.5, 5$, and $k = 0,\ldots,5$, rounding each observed flow size to the nearest restricted flow size, and then deriving the approximate \textsc{pmf} from the proportion of flows rounded to each restricted flow size value.
The same procedure as in Section 7.1 is then used to obtain the sampled NetFlow \textsc{mle}. 

The results are presented in the middle rows of Table~\ref{tab:real_data} and in Figure~\ref{fig:data_survival_functions}(b).
The quantity $p$ indicates the proportion of the total (sub-sampled) dataset used to compute the point estimate.
For example, for the standard \textsc{mle}, $p = 1$ indicates that all the observed inter-renewals were used; whereas for the sampled NetFlow \textsc{mle}, $p = 10^{-3}$ indicates that only 0.1\% of the available sampled NetFlows were used.
The quantity $n$ shows the resulting number of datapoints used in the optimisation, which corresponds to $p$.
The presented values are the mean point estimate and standard error of $T = 10$ random samples from the full dataset.

From Figure~\ref{fig:data_survival_functions}(b) it is clear that na\"{i}vely fitting the Log-Normal distribution to the sampled inter-renewals (solid-grey line) fails to adequately describe the packet transit distribution.
The sampled NetFlow \textsc{mle} performs well, even with the approximations involved in computing the convolution of Log-Normal densities and the approximation of the flow size \textsc{pmf}.
In this case, $n=456$ provides a balance between estimator convergence and computational overheads.
The model fit is naturally poorer than when using the full dataset, although the full dataset is typically unavailable.
More accurate fits can be obtained by using finer approximations of $p_{M}$ and increased numbers of sampled NetFlows $n$. 
In summary, it is clear that the sampled NetFlow \textsc{mle} provides a practical and viable solution for modelling  network traffic, for which only aggregated NetFlow data from thinned traffic data are recorded.
\begin{table}
    \centering
    \caption{\small{
    Mean point estimates (and standard errors) for the real data application in Section~\ref{sec:real_data}.
    The quantity $p$ determines the proportion of total available information used to compute the point estimate.
    The number $n$ denotes the number of datapoints which correspond to $p$ and respectively refers to the number of inter-renewals and NetFlows over which the standard and NetFlow \textsc{mle} are computed.
    The right-most column presents the average time (in seconds) required to compute the point estimate.
    Estimates which correspond to $p < 1$ are averaged over $T = 10$ samples of the data set.
    }}
    \label{tab:real_data}
    \begin{tabular}{@{}lrrrrr@{}} \toprule
                          & Data volume     &            & Parameters &               & Times $(s)$ \\ \cmidrule{2-3} \cmidrule{4-5} 
                          & $p$             &  $n$       & $\mu$      & $\sigma$      &             \\ \midrule
    \multicolumn{6}{l}{$q = 1$}                                                                     \\
    NetFlow \textsc{mle}  & 1\phantom{.000} & 779 788    & -8.05\p    & 3.70\p       & 6            \\ 
                          &                 &            & ---        & ---          &              \\ \addlinespace[1ex]
    Standard \textsc{mle} & 1\phantom{.000} & 34 385 807 & -8.10\p    & 4.50\p       & 3            \\
                          &                 &            & ---        & ---          &              \\ 
    \multicolumn{6}{l}{$q = 0.001$}                                                                 \\
    NetFlow \textsc{mle}  & 0.001           & 5          & -7.41\p    & 2.24\p       & 410          \\
                          &                 &            & (0.64)     & (0.41)       &              \\ \addlinespace[1ex]
                          & 0.01\phantom{1} & 46         & -7.84\p    & 2.73\p       & 714          \\
                          &                 &            & (0.46)     & (0.29)       &              \\ \addlinespace[1ex]
                          & 0.1\phantom{01} & 456        & -8.90      & 3.20         & 2 718        \\
                          &                 &            & (0.03)     & (0.07)       &              \\
                          & 1\phantom{.000} & 4 553      & -8.79      & 2.92         & 13 938       \\ 
                          &                 &            & ---        & ---          &              \\ \addlinespace[1ex]
    Standard \textsc{mle} & 1\phantom{.000} & 19 511     & 0.22       & 2.42         & 0            \\ 
                          &                 &            & ---        & ---          &              \\ \bottomrule
    \end{tabular}
\end{table}

\begin{figure}
    \centering
    \subfigure[$q = 1$]{\includegraphics[scale = 0.8]{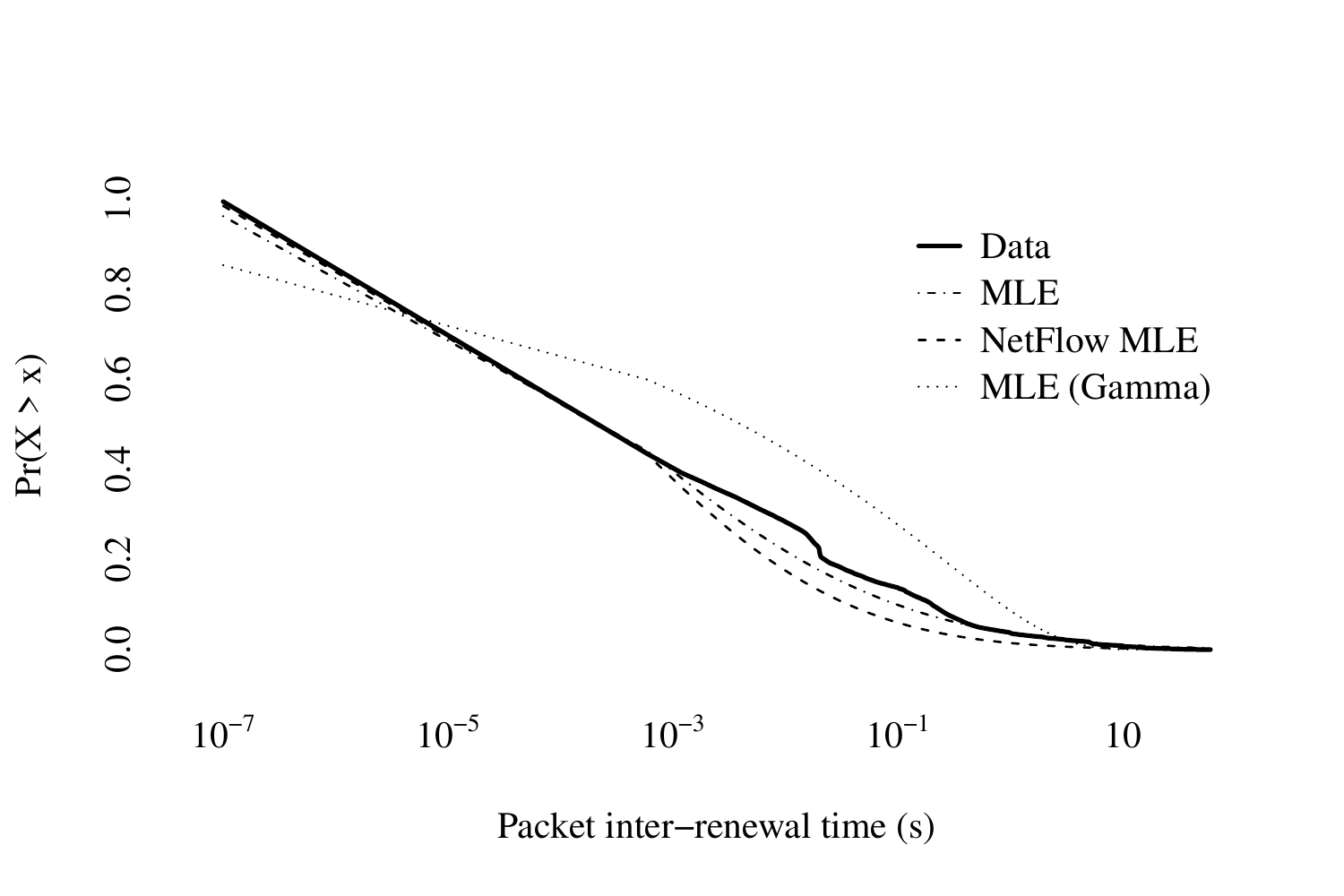}} \\
    \subfigure[$q = 0.001$]{\includegraphics[scale = 0.8]{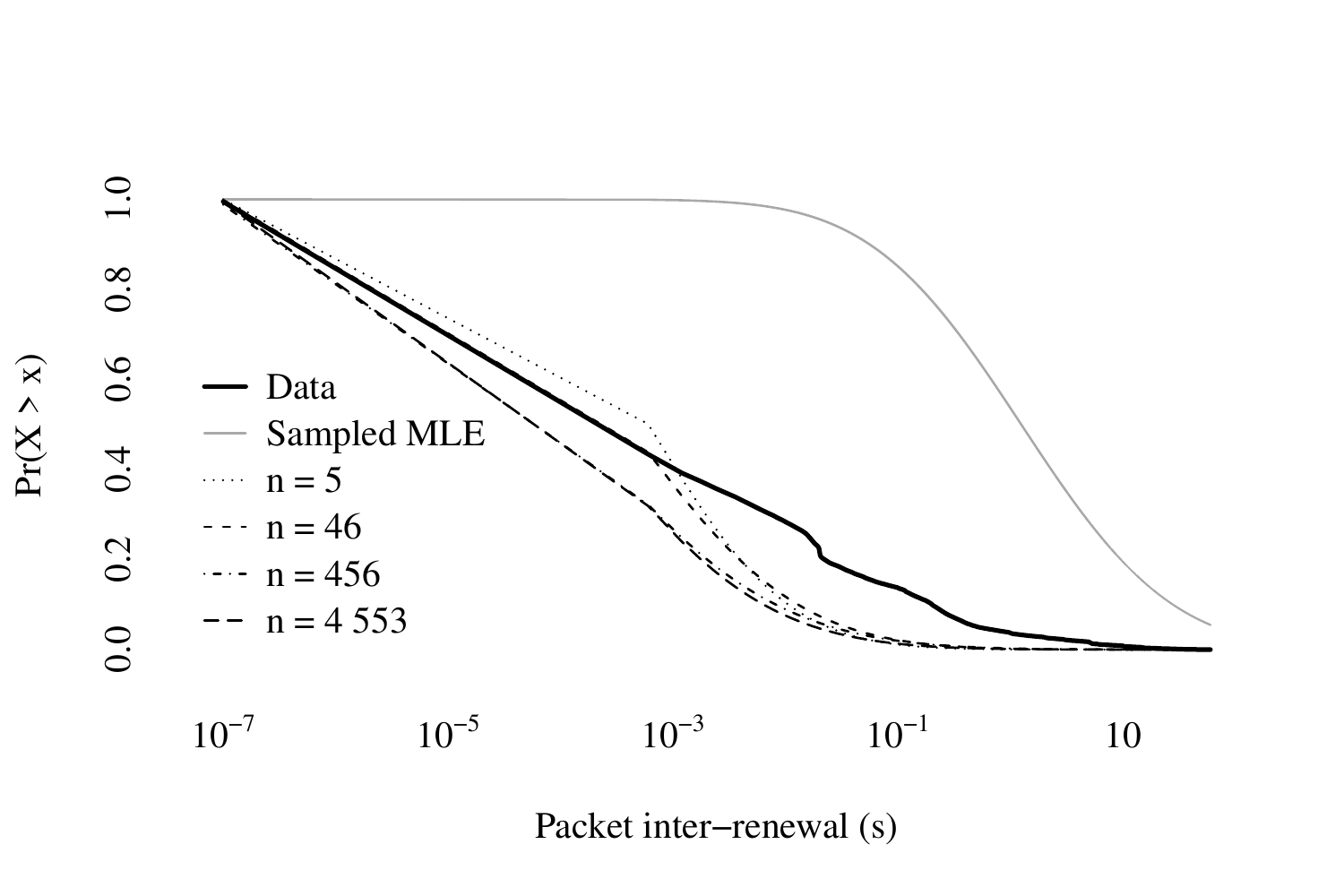}} 
    \caption{\small{
    Survival functions for the observed inter-renewals and various fitted models.
    Figure~\ref{fig:data_survival_functions}(a) supplies the survival function under complete packet retention, i.e.~$q = 1$.
    Figure~\ref{fig:data_survival_functions}(b) depicts the fitted models when each packet in the network is retained through a Bernoulli trial with probability $q = 0.001$.
    The thick black line indicates the survival function of the observed data and is identical in both (a) and (b).
    The dotted line in (a) is obtained by fitting the \textsc{mle} for a Gamma model to the complete set of inter-renewals.
    All other supplied survival functions are fitted to the Log-Normal distribution.
    The solid grey line in (b) indicates the standard \textsc{mle} which is na\"{i}vely applied to the sampled inter-renewals.
    The dotted and dashed lines in (b) correspond to the fitted models from the sampled NetFlow \textsc{mle} obtained from Table~\ref{tab:real_data} for the given number of datapoints $n$, as indicated.
    }}
    \label{fig:data_survival_functions}
\end{figure}

\section{Discussion}
We have introduced a novel method for parametric, likelihood-based inference of network packet data when the utilised data are (possibly) thinned and aggregated.
The ability to jointly handle packet thinning and NetFlow aggregation, within the likelihood framework (with all its inferential benefits) is a great advantage over existing methods for analysing flow data, which can only account for one of these processes.
The maximum likelihood estimators themselves are consistent (with increasing numbers of NetFlows), and we have derived bounds on the number of (thinned) NetFlows needed to attain a given degree of accuracy.
As a result, the NetFlow likelihoods offer a practical and flexible tool for inference on very large session datasets.

The potentially large computational cost is the price to pay for this framework.
Indeed, computing the NetFlow likelihoods requires a convolution of densities which can be particularly complex when the model is not closed under convolutions.
However, we have shown that, even on real data, closed-form approximations can be utilised without severe consequences.
Optimising a likelihood is slower than methods that rely on simple arithmetic computation (such as the moments-based estimators of \citet{Hohn03} and \citet{Antunes18}).
Computation also increases with higher degrees of packet thinning (i.e.~low $q$), or with a high frequency of elephant flows, since the NetFlow likelihood function needs to consider (and integrate out) all possible latent flows which could have produced the observed, thinned flow.

Despite these limitations, the NetFlow likelihood estimators provide an effective method for network analysis when the data we have to work with is less than ideal: both heavily summarised, and heavily subsampled.

\section*{Acknowledgements}
This research is supported by the Australian Research Council through the Australian Centre
of Excellence for Mathematical and Statistical Frontiers (ACEMS; CE140100049), and the
Discovery Project scheme (FT170100079).

Support for \textsc{caida}'s Internet Traces is provided by the National Science Foundation, the \textsc{us} Department of Homeland Security, and \textsc{caida} Members.

This research includes computations using the computational cluster Katana supported by Research Technology Services at \textsc{unsw} Sydney.


\bibliographystyle{plainnat}
\bibliography{references}

\begin{thebibliography}{55}
\providecommand{\natexlab}[1]{#1}
\providecommand{\url}[1]{\texttt{#1}}
\expandafter\ifx\csname urlstyle\endcsname\relax
  \providecommand{\doi}[1]{doi: #1}\else
  \providecommand{\doi}{doi: \begingroup \urlstyle{rm}\Url}\fi

\bibitem[{Antunes} and {Pipiras}(2011)]{Antunes11}
Nelson {Antunes} and Vladas {Pipiras}.
\newblock Probabilistic sampling of finite renewal processes.
\newblock \emph{Bernoulli}, 17\penalty0 (4):\penalty0 1285--1326, 2011.
\newblock \doi{10.3150/10-BEJ321}.

\bibitem[{Antunes} and {Pipiras}(2016)]{Antunes16}
Nelson {Antunes} and Vladas {Pipiras}.
\newblock Estimation of flow distributions from sampled traffic.
\newblock \emph{\textsc{acm} Transactions on Modelling and Performance
  Evaluation of Computing Systems}, 1\penalty0 (3):\penalty0 1--28, 2016.
\newblock \doi{10.1145/2891106}.

\bibitem[{Antunes} et~al.(2017){Antunes}, {Pipiras}, {Abry}, and
  {Veitch}]{Antunes18}
Nelson {Antunes}, Vladas {Pipiras}, Patric {Abry}, and Darryl {Veitch}.
\newblock Small and large scale behavior of moments of poisson cluster
  processes.
\newblock \emph{\textsc{esaim}: Probability and Statistics}, 21:\penalty0
  369--393, 2017.
\newblock \doi{10.1051/ps/2017018}.

\bibitem[{Bartlett}(1963)]{Bartlett63}
Maurice~S. {Bartlett}.
\newblock The spectral analysis of point processes.
\newblock \emph{Journal of the Royal Statistical Society. Series B
  (Methodological)}, 25\penalty0 (2):\penalty0 264--296, July 1963.
\newblock \doi{10.1111/j.2517-6161.1963.tb00508.x}.

\bibitem[{Bartlett}(1964)]{Bartlett64}
Maurice~S. {Bartlett}.
\newblock The spectral analysis of two-dimensionsal point processes.
\newblock \emph{Biometrika}, 51\penalty0 (3--4):\penalty0 299--311, December
  1964.
\newblock \doi{10.1093/biomet/51.3-4.299}.

\bibitem[{Beranger} et~al.(2018){Beranger}, {Lin}, and {Sisson}]{Beranger18}
Boris {Beranger}, Huan {Lin}, and Scott~A. {Sisson}.
\newblock New models for symbolic data analysis.
\newblock \emph{arXiv e-prints}, art. arXiv:1809.03659, September 2018.

\bibitem[{Billard} and {Diday}(2002)]{Billard02}
Lynne {Billard} and Edwin {Diday}.
\newblock Symbolic regression analysis.
\newblock In Krzysztof {Jajuga}, Andrzej {Sokołowski}, and Hans-Hermann
  {Bock}, editors, \emph{Classification, Clustering, and Data Analysis},
  Studies in Classification, Data Analysis, and Knowledge Organization, pages
  281--288, Berlin, Heidelberg, July 2002. Springer.
\newblock \doi{10.1007/978-3-642-56181-8_31}.

\bibitem[{Billard} and {Diday}(2003)]{Billard03}
Lynne {Billard} and Edwin {Diday}.
\newblock From the statistics of data to the statistics of knowledge: symbolic
  data analysis.
\newblock \emph{Journal of the American Statistical Association}, 98\penalty0
  (462):\penalty0 470--487, 2003.

\bibitem[{Billard} and {Diday}(2004)]{Billard04}
Lynne {Billard} and Edwin {Diday}.
\newblock Symbolic data analysis: definitions and examples.
\newblock Technical report, University of Georgia, Athens, \textsc{usa}, 2004.

\bibitem[{Billard} and {Diday}(2007)]{Billard07}
Lynne {Billard} and Edwin {Diday}.
\newblock \emph{Symbolic data analysis: conceptual statistics and data mining}.
\newblock Wiley series in computational statistics. Wiley, The Atrium, Southern
  Gate, Chichester, West Sussex PO19 8SQ, England, 2007.

\bibitem[{Bin} et~al.(2008){Bin}, {Chuang}, {Jian}, {Jianping}, and
  {Ungsunan}]{Bin08}
Liu {Bin}, Lin {Chuang}, Qiao {Jian}, He~{Jianping}, and Peter {Ungsunan}.
\newblock A netflow based flow analysis and monitoring system in enterprise
  networks.
\newblock \emph{Computer Networks}, 52\penalty0 (5):\penalty0 1074--1092, 2008.
\newblock \doi{10.1016/j.comnet.2007.12.004}.

\bibitem[{Brito} and {Silva}(2012)]{Brito12}
Paula {Brito} and A.~Pedro~Duarte {Silva}.
\newblock Modelling interval data with normal and skew-normal distributions.
\newblock \emph{Journal of Applied Statistics}, 39\penalty0 (1):\penalty0
  3--20, January 2012.
\newblock \doi{10.1080/02664763.2011.575125}.

\bibitem[{Brownlee} and {Claffy}(2002)]{Brownlee02}
Nevil {Brownlee} and K.~C. {Claffy}.
\newblock Understanding internet traffic streams: dragonflies and tortoises.
\newblock \emph{\textsc{ieee} Communications Magazine}, 40\penalty0
  (10):\penalty0 110--117, October 2002.
\newblock \doi{10.1109/MCOM.2002.1039865}.

\bibitem[{Carela-Espa\~{n}ol} et~al.(2011){Carela-Espa\~{n}ol}, {Barlet-Ros},
  {Cabellos-Aparicio}, and {Sol\'{e}-Pareta}]{Carela10}
Valent\'{i}n {Carela-Espa\~{n}ol}, Pere {Barlet-Ros}, Albert
  {Cabellos-Aparicio}, and Josep {Sol\'{e}-Pareta}.
\newblock Analysis of the impact of sampling on netflow traffic classification.
\newblock \emph{Computer Networks}, 55\penalty0 (5):\penalty0 1083--1099, 2011.
\newblock \doi{10.1016/j.comnet.2010.11.002}.

\bibitem[{Chabchoub} et~al.(2010){Chabchoub}, {Fricker}, {Guillemin}, and
  {Robert}]{Chabchoub10}
Yousra {Chabchoub}, Christine {Fricker}, Fabrice {Guillemin}, and Phillipe
  {Robert}.
\newblock On the statistical characterisation of flows in internet traffic with
  application to sampling.
\newblock \emph{Computer Communications}, 33\penalty0 (1):\penalty0 103--112,
  January 2010.
\newblock \doi{10.1016/j.comcom.2009.08.006}.

\bibitem[Cisco(2019)]{Cisco19}
Cisco.
\newblock Cisco visual networking index: forecast and trends, 2017--2022.
\newblock Technical report, Cisco, San Jose, California, United States,
  February 2019.

\bibitem[{Daley} and {Vere-Jones}(2002)]{Daley02}
Daryl~J. {Daley} and David {Vere-Jones}.
\newblock \emph{An introduction to the theory of point processes: elementary
  theory and methods}, volume~1 of \emph{Springer Series in Statistics}.
\newblock Springer-Verlag, New York, 2002.

\bibitem[de~A.~Lima~{Neto} and de~A.~T.~{de Carvalho}(2008)]{Neto08}
Eufr\'{a}sio de~A.~Lima~{Neto} and Francisco de~A.~T.~{de Carvalho}.
\newblock Centre and range method for fitting a linear regression model to
  symbolic interval data.
\newblock \emph{Computational Statistics \& Data Analysis}, 52\penalty0
  (3):\penalty0 1500--1515, January 2008.
\newblock \doi{10.1016/j.csda.2007.04.014}.

\bibitem[{Duffield} et~al.(2005){Duffield}, {Lund}, and {Thorup}]{Duffield05}
Nick {Duffield}, Carsten {Lund}, and Mikkel {Thorup}.
\newblock Estimating flow distributions from sampled flow statistics.
\newblock \emph{\textsc{ieee/acm} Transactions on Networking}, 13\penalty0
  (5):\penalty0 933--946, October 2005.
\newblock \doi{10.1109/TNET.2005.852874}.

\bibitem[{Gonz\'{a}lez-Ar\'{e}valo} and {Roy}(2010)]{Gonzales10}
Barbara {Gonz\'{a}lez-Ar\'{e}valo} and Julie {Roy}.
\newblock Simulating a poisson cluster process for internet traffic packet
  arrivals.
\newblock \emph{Computer Communications}, 33\penalty0 (5):\penalty0 612--618,
  March 2010.
\newblock \doi{10.1016/j.comcom.2009.11.004}.

\bibitem[{Hofstede} et~al.(2014){Hofstede}, {\u{C}eleda}, {Trammell}, {Drago},
  {Sadre}, {Sperotto}, and {Pras}]{Hofstede14}
Rick {Hofstede}, Pavel {\u{C}eleda}, Brian {Trammell}, Idilio {Drago}, Ramin
  {Sadre}, Anna {Sperotto}, and Aiko {Pras}.
\newblock Flow monitoring explained: from packet capture to data analysis with
  netflow and \textsc{ipfix}.
\newblock \emph{\textsc{ieee} Surveys \& Tutorials}, 16\penalty0 (4):\penalty0
  2037--2064, 2014.
\newblock \doi{10.1109/COMST.2014.2321898}.

\bibitem[{Hohn} and {Veitch}(2006)]{Hohn06}
Nicholas {Hohn} and Darryl {Veitch}.
\newblock Inverting sampled traffic.
\newblock \emph{\textsc{ieee/acm} Transactions on Networking}, 14\penalty0
  (1):\penalty0 68--80, February 2006.
\newblock \doi{10.1109/TNET.2005.863456}.

\bibitem[{Hohn} et~al.(2003){Hohn}, {Veitch}, and {Arby}]{Hohn03}
Nicholas {Hohn}, Darryl {Veitch}, and Patrice {Arby}.
\newblock Cluster processes: a natural language for network traffic.
\newblock \emph{\textsc{ieee} Transactions on Signal Processing}, 51\penalty0
  (8):\penalty0 2229--2244, August 2003.
\newblock \doi{10.1109/TSP.2003.814460}.

\bibitem[{H\"{o}pfner}(2014)]{Hopfner14}
Reinhard {H\"{o}pfner}.
\newblock \emph{Asymptotic statistics : with a view to stochastic processes}.
\newblock Walter de Gruyter GmbH \& Co \textsc{kg}, Berlin, Germany, 1 edition,
  2014.

\bibitem[{Jurga} and {Hulb\'{o}j}(2007)]{Jurga07}
Ryszard~Erazm {Jurga} and Milosz~Marian {Hulb\'{o}j}.
\newblock Packet sampling for network modelling.
\newblock Technical report, \textsc{cern}, Geneva, Switzerland, December 2007.

\bibitem[{Karr}(1991)]{Karr91}
Alan~F. {Karr}.
\newblock \emph{Point processes and their statistical inference}.
\newblock Probability: pure and applied. Marcel Dekker Inc.4, New York, 2nd
  edition, 1991.

\bibitem[{Kim} et~al.(2019){Kim}, {Sim}, {Tierney}, {Suh}, and {Kim}]{Kim19}
Jinoh {Kim}, Alex {Sim}, Brian {Tierney}, Sang {Suh}, and Ikkyun {Kim}.
\newblock Multivariate network traffic analysis using clustered patterns.
\newblock \emph{Computing}, 101\penalty0 (4):\penalty0 339--361, April 2019.
\newblock \doi{10.1007/s00607-018-0619-4}.

\bibitem[{Kundu} et~al.(2007){Kundu}, {Pal}, {Basu}, and {Das}]{Kundu07}
Sumantra~R. {Kundu}, Sourav {Pal}, Kalyan {Basu}, and Sajal~K. {Das}.
\newblock Fast classification and estimation of internet traffic flows.
\newblock In Steve {Uhlig}, Konstantina {Papagiannaki}, and Olivier
  {Bonaventure}, editors, \emph{Passive and Active Network Measurement}, volume
  4427 of \emph{Lecture Notes in Computer Science}, pages 155--164. Springer,
  2007.

\bibitem[{Lakhina} et~al.(2004){Lakhina}, {Papagiannaki}, {Crovella}, {Diot},
  {Kolaczyk}, and {Taft}]{Lakhina04}
Anukool {Lakhina}, Konstantina {Papagiannaki}, Marke {Crovella}, Christophe
  {Diot}, Eric {Kolaczyk}, and Nina {Taft}.
\newblock Structural analysis of network traffic flows.
\newblock \emph{\textsc{acm sigmetrics} Performance Evaluation Review},
  32\penalty0 (1):\penalty0 61--72, March 2004.
\newblock \doi{10.1145/1005686.1005697}.

\bibitem[{Lauro} and {Palumbo}(2000)]{Lauro00}
Varlo~N. {Lauro} and Francesco {Palumbo}.
\newblock Principal component analysis of interval data: a symbolic data
  analysis approach.
\newblock \emph{Computational Statistics}, 15\penalty0 (1):\penalty0 73--87,
  September 2000.
\newblock \doi{10.1007/s001800050038}.

\bibitem[{Le-Rademacher} and {Billard}(2011)]{Billard11}
Jennifer {Le-Rademacher} and Lynne {Billard}.
\newblock Likelihood functions and some maximum likelihood estimators for
  symbolic data.
\newblock \emph{Journal of Statistical Planning and Inference}, 141\penalty0
  (4):\penalty0 1593--1602, April 2011.
\newblock \doi{10.1016/j.jspi.2010.11.016}.

\bibitem[{Lee} et~al.(2010){Lee}, {Kang}, and {Son}]{Lee10}
Youngseop {Lee}, Wonchul {Kang}, and Hyeongu {Son}.
\newblock An internet traffic analysis method with mapreduce.
\newblock In \emph{2010 \textsc{ieee/ifip} network operations and management
  symposium workshops, \textsc{noms} 2010}, pages 357--361, Osaka, Japan, April
  2010. \textsc{ieee}.
\newblock \doi{10.1109/NOMSW.2010.5486551}.

\bibitem[{Lehmann} and {Casella}(1998)]{Lehmann98}
Erich~L. {Lehmann} and George {Casella}.
\newblock \emph{Theory of point estimation}.
\newblock Springer Texts in Statistics. Springer--Verlag, New York,
  \textsc{usa}, 2nd edition, 1998.

\bibitem[{Lewis}(1964{\natexlab{a}})]{Lewis64a}
Peter A.~W. {Lewis}.
\newblock A branching poisson process model for the analysis of computer
  failure patterns.
\newblock \emph{Journal of the Royal Statistical Society. Series B
  (Methodological)}, 26\penalty0 (3):\penalty0 398--456, September
  1964{\natexlab{a}}.
\newblock \doi{10.1111/j.2517-6161.1964.tb00573.x}.

\bibitem[{Lewis}(1964{\natexlab{b}})]{Lewis64b}
Peter A.~W. {Lewis}.
\newblock Implications of a failure model for the use and maintenance of
  computers.
\newblock \emph{Journal of Applied Probability}, 1\penalty0 (2):\penalty0
  347--368, December 1964{\natexlab{b}}.
\newblock \doi{10.2307/3211865}.

\bibitem[{Lin} et~al.(2017){Lin}, {Caley}, and {Sisson}]{Lin17}
Huan {Lin}, M.~Julian {Caley}, and Scott~A. {Sisson}.
\newblock Estimating global species richness using symbolic data meta-analysis.
\newblock \emph{arXiv e-prints}, art. arXiv:1711.03202, November 2017.

\bibitem[{Marlow}(1967)]{Marlow67}
N.A. {Marlow}.
\newblock A normal limit theorem for power sums of independent random
  variables.
\newblock \emph{The Bell System Technical Journal}, 46\penalty0 (9):\penalty0
  2081--2089, November 1967.
\newblock \doi{10.1002/j.1538-7305.1967.tb04244.x}.

\bibitem[{Miao} et~al.(2016){Miao}, {Ruan}, {Pan}, {Zhang}, {Xiang}, and
  {Wang}]{Miao16}
Yuantin {Miao}, Zichan {Ruan}, Lei {Pan}, Jun {Zhang}, Yang {Xiang}, and
  Yu~{Wang}.
\newblock Comprehensive analysis of network traffic data.
\newblock In \emph{2016 \textsc{ieee} international conference on computer and
  information technology}, pages 423--430, Nadi, Fiji, December 2016.
  \textsc{ieee}.
\newblock \doi{10.1109/CIT.2016.22}.

\bibitem[{Morris} and {Lock}(2009)]{MorrisLock06}
Carl~N. {Morris} and Kari~F. {Lock}.
\newblock Unifying the named natural exponential families and their relatives.
\newblock \emph{The American Statistician}, 63\penalty0 (3):\penalty0 247--253,
  August 2009.

\bibitem[{Newey}(1991)]{Newey91}
Whitney~K. {Newey}.
\newblock Uniform convergence in probability and stochastic equicontinuity.
\newblock \emph{Econometrica}, 59\penalty0 (4):\penalty0 1161--1167, 1991.
\newblock \doi{10.2307/2938179}.

\bibitem[{Noirhomme-Fraiture} and {Brito}(2011)]{Brito11}
Monique {Noirhomme-Fraiture} and Paula {Brito}.
\newblock Far beyond the classical data models: symbolic data analysis.
\newblock \emph{Statistical Analysis and Data Mining}, 4\penalty0 (2):\penalty0
  157--170, March 2011.
\newblock \doi{10.1002/sam.10112}.

\bibitem[{Onof} and {Wheater}(1993)]{Onof93}
Christian {Onof} and Howard~S. {Wheater}.
\newblock Modelling of british rainfall using a random parameter
  bartlett--lewis rectangular pulse model.
\newblock \emph{Journal of Hydrology}, 149\penalty0 (1--4):\penalty0 67--95,
  August 1993.
\newblock \doi{10.1016/0022-1694(93)90100-N}.

\bibitem[Proto et~al.(2010)Proto, {Alexandre}, {Batista}, {Oliveira}, and
  {Cansian}]{Proto10}
Andr\'{e} Proto, Leandro~A. {Alexandre}, Maira~L. {Batista}, Isabela~L.
  {Oliveira}, and Adriano~M. {Cansian}.
\newblock Statistical model applied to netflow for network intrusion.
\newblock In Marina~L. {Gavrilova}, C.~J.~Kenneth {Tan}, and Edward~David
  {Moreno}, editors, \emph{Transactions on computational science \textsc{XI}:
  special issue on security in computing}, volume~2, pages 179--191.
  Springer--Verlag, Berlin Heidelberg, Germany, 2010.

\bibitem[{Ribeiro} et~al.(2006){Ribeiro}, {Towsley}, {Ye}, and
  {Bolot}]{Ribeiro06}
B.~F. {Ribeiro}, D.~F. {Towsley}, T.~{Ye}, and JC. {Bolot}.
\newblock Fisher information of sampled packets: an application to flow size
  estimation.
\newblock In \emph{Internet Measurement}, 2006.

\bibitem[{Stoev} et~al.(2005){Stoev}, {Taqqu}, {Park}, and {Marron}]{Stoev05}
S.~{Stoev}, M.S. {Taqqu}, C.~{Park}, and J.~S. {Marron}.
\newblock On the wavelet spectrum diagnostic for hurst parameter estimation in
  the analysis of internet traffic.
\newblock \emph{Computer Networks}, 48\penalty0 (3):\penalty0 423--445, 2005.

\bibitem[\textsc{caida}(2019)]{Caida19}
\textsc{caida}.
\newblock The \textsc{caida ucsd} anonymized internet traces
  <20190117-1315500>, 2019.
\newblock \url{http://www.caida.org/data/passive/passive_dataset.xml}
  [Accessed: 22.April.2020].

\bibitem[{Tong}(1990)]{Tong90}
Yung~L. {Tong}.
\newblock \emph{The multivariate normal distribution}.
\newblock Springer Series in Statistics. Springer-Verlag, New York, 1 edition,
  1990.
\newblock \doi{10.1007/978-1-4613-9655-0}.

\bibitem[{Veitch} and {Tune}(2015)]{Veitch15}
D.~{Veitch} and P.~{Tune}.
\newblock Optimal skampling for the flow size distribution.
\newblock \emph{\textsc{ieee} Transactions on Information Theory}, 61\penalty0
  (6):\penalty0 3075--3099, 2015.

\bibitem[{Verde}(2004)]{Verde04}
Rosanna {Verde}.
\newblock Clustering methods in symbolic data analysis.
\newblock In David {Banks}, Frederick~R. {McMorris}, Phipps {Arabie}, and
  Wolfgang {Gaul}, editors, \emph{Classification, Clustering, and Data Mining
  Applications}, Studies in Classification, Data Analysis, and Knowledge
  Organisation, pages 299--317, Chicago, \textsc{usa}, July 2004. Springer.
\newblock \doi{10.1007/978-3-642-17103-1_29}.

\bibitem[{Whitaker} et~al.(2019){Whitaker}, {Beranger}, and
  {Sisson}]{Whitaker19b}
Tom {Whitaker}, Boris {Beranger}, and Scott~A. {Sisson}.
\newblock Logistic regression models for aggregated data.
\newblock \emph{arXiv e-prints}, art. arXiv:1912.03805, December 2019.

\bibitem[{Whitaker} et~al.(2020){Whitaker}, {Beranger}, and
  {Sisson}]{Whitaker19a}
Tom {Whitaker}, Boris {Beranger}, and Scott~A. {Sisson}.
\newblock Composite likelihood methods for histogram-valued random variables.
\newblock \emph{Statistics and Computing}, art. arXiv:1908.11548, 2020.
\newblock In press.

\bibitem[{Williamson}(2001)]{Williamson01}
C.~{Williamson}.
\newblock Internet traffic measurement.
\newblock \emph{\textsc{ieee} Internet Computing}, 5\penalty0 (6):\penalty0
  70--74, 2001.

\bibitem[{Yang} and {Michailidis}(2007)]{Yang07}
Lili {Yang} and George {Michailidis}.
\newblock Sampled based estimation of network traffic flow characteristics.
\newblock In \emph{\textsc{ieee infocom}}, pages 1775--1783, Barcelona, Spain,
  2007. \textsc{ieee}.

\bibitem[{You} and {Chandra}(1999)]{You99}
C.~{You} and Kavitha {Chandra}.
\newblock Time series models for internet data traffic.
\newblock In \emph{Local Computer Networks}, pages 164--171, Lowell,
  \textsc{ma}, \textsc{usa}, 1999. \textsc{ieee}.
\newblock \doi{10.1109/LCN.1999.802013}.

\bibitem[{Zhang} et~al.(2020){Zhang}, {Beranger}, and {Sisson}]{Zhang19}
Xin {Zhang}, Boris {Beranger}, and Scott~A. {Sisson}.
\newblock Constructing likelihood functions for interval-valued random
  variables.
\newblock \emph{Scandinavian Journal of Statistics}, 47\penalty0 (1):\penalty0
  1--35, 2020.
\newblock \doi{10.1111/sjos.12395}.

\end{thebibliography}

\appendix
\section{Proofs} \label{app:proofs}
\subsection{Proof to Proposition~\ref{prop:netflow_likelihood}} \label{proof:netflow_likelihood}
\begin{proof}
The conditional density of an observed Netflow $s$ is
\begin{equation} \label{eq:symbol_conditional_density}
    f_{S\vert\bm{X}}{\left(s\vert\bm{x}\right)} = \delta_{s_{f}}{\left(x_{1}\right)}\,\delta_{s_{d}}{\left(1_{\{m > 0\}}\cdot\sum^{m + 1}_{k = 2} x_{k}\right)}\,\delta_{m + 1}{\left(\left|\bm{x}\right|\right)},
\end{equation}

\noindent where $\delta_{a}{\left(\cdot\right)}$ is the Dirac measure centred at $a$. 
Equation \eqref{eq:symbol_conditional_density} simply delineates that the time between flows, flow duration, and flow size correspond exactly with the same quantities obtained from the observed sequence of inter-renewal times $\bm{x}$.
The internal variation of the Netflow is defined by the variability of the non-leading inter-renewals $x_{2},\ldots,x_{m+1}$.

Let $p_{M}{\left(m + 1;\nu\right)}$ be the mass function for the random flow size $M + 1$.
The conditional density of the sequence of inter-renewals $\bm{x}$ is
\begin{equation} \label{eq:interrenewal_times_conditional_density}
    g_{\bm{X}\vert M}{\left(\bm{x}\vert m;\bm{\theta}\right)} = \prod^{m + 1}_{i = 1} g_{i}{\left(x_{i};\theta_{i}\right)}.
\end{equation}
\noindent We then have the joint density 
\begin{equation} \label{eq:flow_interrenewal_joint_density}
g_{\bm{X}}{\left(\bm{x};\bm{\theta},\nu\right)} = \prod^{m + 1}_{i = 1} g_{i}{\left(x_{i};\theta_{i}\right)}\,p_{M}{\left(m + 1;\nu\right)}.
\end{equation}
\noindent Substituting equations \eqref{eq:symbol_conditional_density} and \eqref{eq:flow_interrenewal_joint_density} into \eqref{eq:generative_symbolic_likelihood} and computing the integral then yields the result.
\end{proof}

\subsection{Proof to Proposition~\ref{prop:sampled_netflow_likelihood}} \label{proof:sampled_netflow_likelihood}
\begin{proof}
The conditional density for the sub-sampled NetFlow symbol $\widetilde{s}$ given the sub-sampled inter-renewal sequence $\widetilde{\bm{x}}$ is
\begin{equation} \label{eq:sampled_netflow_conditional_density}
    f_{\widetilde{S}\vert\widetilde{\bm{X}}}\left(\widetilde{s}\vert\widetilde{\bm{x}}\right) = \delta_{\widetilde{s}_{f}}{\left(\widetilde{x}_{1}\right)}\,\delta_{\widetilde{s}_{d}}{\left(1_{\left\{\widetilde{m} > 1\right\}}\cdot\sum^{\widetilde{m}}_{i = 2} \widetilde{x}_{i}\right)}\,\delta_{\widetilde{m}}{\left(\lvert\widetilde{\bm{x}}\rvert\right)}.
\end{equation}

\noindent As with \eqref{eq:symbol_conditional_density}, the conditional density $f_{\widetilde{S}\vert\widetilde{\bm{X}}}$ delineates that the observed sub-sampled NetFlow symbol $\widetilde{s}$ matches the restrictions of the sub-sampled inter-renewal times $\widetilde{\bm{x}}$.

We can construct a sub-sampled equivalent of \eqref{eq:netflow_likelihood} by integrating over the space of sub-sampled inter-renewal sequences, so that
\begin{equation} \label{eq:sampled_netflow_likelihood_integral}
    \mathcal{L}_{\widetilde{S}}{\left(\widetilde{s};\bm{\theta}\right)} \propto \int_{\widetilde{\mathcal{X}}} f_{\widetilde{S}\vert\widetilde{\bm{X}}}{\left(\widetilde{s}\vert\widetilde{\bm{x}}\right)}\,g_{\widetilde{\bm{X}}}{\left(\widetilde{\bm{x}};\bm{\theta}\right)}\,\mathrm{d}\widetilde{\bm{x}},
\end{equation}
\noindent where the region of integration $\widetilde{\mathcal{X}}$ is the set of sampled inter-renewal sequences of length $\widetilde{m}$.
Although computation of the integrals for both the NetFlow and sub-sampled NetFlow likelihoods appears identical, there is a material difference in the marginal densities of the original and sampled inter-renewal sequences $g_{\bm{X}}$ and $g_{\widetilde{\bm{X}}}$, respectively.

We obtain the marginal density for the sampled inter-renewal sequence by integrating over the joint density of its generating components; the original flow size, original inter-renewal sequence, and packet sampling procedure. 

The number of sampled packets $\widetilde{M}$ is conditionally Binomially distributed since each packet arrival was independently Bernoulli sampled, that is, $\widetilde{M}\vert M + 1\sim \text{Bin}\left(M + 1, q\right)$. 
It follows that the conditional probability of obtaining a particular sampled inter-renewal sequence $\widetilde{\bm{x}}_{i}$ of length $\widetilde{m}$ follows a discrete uniform distribution, that is,
\begin{equation} \label{eq:particular_sampled_interrenewals_conditional_density}
  p_{\widetilde{\bm{X}}_{i}\vert M,\widetilde{M}}{\left(\widetilde{\bm{x}}_{i}\vert m, \widetilde{m}\right)} = \binom{m + 1}{\widetilde{m}}^{-1},    
\end{equation}
\noindent for $i=1,\ldots,\binom{m + 1}{\widetilde{m}}$.
An application of Bayes' theorem yields the conditional mass function of the original flow size
\begin{equation} \label{eq:flow_size_conditional_density}
\begin{aligned}
    p_{M\vert\widetilde{M}}{\left(m + 1\vert\widetilde{m};\nu,q\right)} &= \frac{p_{M,\widetilde{M}}{\left(m,\widetilde{m};\nu,q\right)}}{p_{\widetilde{M}}{\left(\widetilde{m};\nu,q\right)}}1_{\left\{m + 1\geq\widetilde{m}\right\}} \\
                                                                        &= \frac{p_{\widetilde{M}\vert M}{\left(\widetilde{m}\vert m;q\right)}\,p_{M}{\left(m + 1;\nu\right)}}{p_{\widetilde{M}}{\left(\widetilde{m};\nu,q\right)}}1_{\left\{m + 1\geq\widetilde{m}\right\}}.
\end{aligned}
\end{equation}
\noindent The marginal density of the sampled flow size is
\begin{equation} \label{eq:sampled_flow_size_marginal_density}
  p_{\widetilde{M}}{\left(\widetilde{m};\nu, q\right)} = \sum_{m + 1 \geq \widetilde{m}} \binom{m + 1}{\widetilde{m}}q^{\widetilde{m}}(1 - q)^{m + 1 - \widetilde{m}}\,p_{M}{\left(m + 1;\nu\right)}.
\end{equation}
\noindent Taking the product of the densities \eqref{eq:interrenewal_times_conditional_density}, \eqref{eq:particular_sampled_interrenewals_conditional_density}, 
\eqref{eq:flow_size_conditional_density}, and \eqref{eq:sampled_flow_size_marginal_density} yields the joint density 
\begin{displaymath}
f_{\widetilde{\bm{X}}_{i},\bm{X},M,\widetilde{M}}{\left(\widetilde{\bm{x}}_{i}, \bm{x}, m, \widetilde{m};\theta,\nu,q\right)}.
\end{displaymath}
\noindent The marginal density of the sampled inter-renewal sequence of length $\widetilde{m}$ is given by the integral
\begin{equation} \label{eq:sampled_interrewal_times_marginal_density}
    g_{\widetilde{\bm{X}}}\left(\widetilde{\bm{x}};\bm{\theta}\right) = \int\cdots\int_{\Omega} f_{\widetilde{\bm{X}}_{i},\bm{X},M,\widetilde{M}}{\left(\widetilde{\bm{x}}_{i}, \bm{x}, m, \widetilde{m};\theta,\nu,q\right)}\;\mathrm{d}\mu,
\end{equation}
\noindent where $\mu$ is an appropriately defined product measure composed from the Lebesgue and counting measures, and $\Omega$ is the Cartesian product of the conditional sample spaces for $M$, $\bm{X}$ and $\widetilde{\bm{X}}_{i}$.
Explicitly, we write 
\begin{displaymath}
    \Omega = \N\times\R^{(m + 1)}_{+}\times\left\{1,\ldots,\binom{m + 1}{\widetilde{m}}\right\}.
\end{displaymath}
\noindent Substituting \eqref{eq:sampled_netflow_conditional_density} and \eqref{eq:sampled_interrewal_times_marginal_density} into \eqref{eq:sampled_netflow_likelihood_integral} and computing the integral with respect to $\delta_{\widetilde{m}}{\left(\vert\widetilde{\bm{x}}\vert\right)}$ yields the intermediate expression
\begin{equation} \label{eq:intermediate_sampled_netflow_likelihood}
\mathcal{L}_{\widetilde{S}}{\left(\widetilde{s};\bm{\theta}\right)} \propto \sum_{m + 1 \geq \widetilde{m}} \Psi_{m}{\left(\widetilde{m};\nu,q\right)}\int_{\mathcal{X}}g_{\bm{X}\vert M}{\left(\bm{x};\bm{\theta}\right)}\,\delta_{\widetilde{s}}{\left(\widetilde{\bm{x}}\right)}\;\mathrm{d}\mu,
\end{equation}

\noindent where 
\begin{displaymath}
\begin{aligned}
    \Psi_{m}{\left(\widetilde{m};\nu,q\right)}              &= p_{M}{\left(m + 1;\nu\right)}\,\tau_{m}{\left(\widetilde{m};q\right)}, \\
    \delta_{\widetilde{s}}{\left(\widetilde{\bm{x}}\right)} &= \delta_{\widetilde{s}_{f}}\left(\widetilde{x}_{1}\right)\,\delta_{\widetilde{s}_{d}}{\left(1_{\left\{\widetilde{m} > 1\right\}}\cdot\sum^{\widetilde{m}}_{i = 2} \widetilde{x}_{i}\right)}, 
\end{aligned}
\end{displaymath}
\noindent and the conditional density $g_{\bm{X}\vert M}$ is defined in \eqref{eq:interrenewal_times_conditional_density}.
Computing the integral in \eqref{eq:intermediate_sampled_netflow_likelihood} will then conclude the proof.
Note that the region of integration can be written as 
\begin{displaymath}
    \mathcal{X} = \R^{(m + 1)}_{+} = \R^{J}_{+}\times\R^{K}_{+}\times\R^{(m + 1 - (J + K))}_{+},
\end{displaymath}
\noindent where $J$ and $K$ are integral-valued random variables respectively denoting the location of the first sampled packet and the number of inter-renewals between the first and the last sampled packets.
Hence, we can compute the marginal integral with respect to the random variables $J$ and $K$. 
By splitting the product $\prod^{m + 1}_{i = 1} g_{i}{\left(x_{i};\theta_{i}\right)}$ according to $J$ and $K$ and integrating with respect to $\delta_{\widetilde{s}}{\left(\widetilde{\bm{x}}\right)}$, we obtain the integral
\begin{displaymath}
\int_{\mathcal{J,K}} \mathcal{G}_{j,k}{\left(\widetilde{s};\bm{\theta}\right)}\,p_{J, K\vert M,\widetilde{M}}{\left(j,k\vert m,\widetilde{m}\right)}\;\mathrm{d}\mu,
\end{displaymath}

\noindent where the joint density 
\begin{displaymath}
\begin{aligned}
    p_{J,K\vert M,\widetilde{M}} &= p_{J\vert M,\widetilde{M}}(j\vert m,\widetilde{m})\,p_{K\vert M,\widetilde{M},J}(k\vert m,\widetilde{m},j) \\
                                 &= \frac{\binom{m + 1 - j}{\widetilde{m} - 2}}{\binom{m + 1}{\widetilde{m}}}\,\frac{\binom{k - 1}{\widetilde{m} - 2}}{\binom{m + 1 - j}{\widetilde{m} - 1}} \\
                                 &= \frac{\binom{k - 2}{\widetilde{m} - 1}}{\binom{m + 1}{\widetilde{m}}}.
\end{aligned}
\end{displaymath}
\noindent The integral then simplifies to 
\begin{displaymath}
    \sum^{m + 2 - \widetilde{m}}_{j = 1}\sum^{m + 1 - j}_{k = \widetilde{m} - 1}\upsilon_{m,k}{\left(\widetilde{m}\right)}\mathcal{G}{\left(\widetilde{s};\bm{\theta}\right)},
\end{displaymath}
\noindent which yields \eqref{eq:sampled_netflow_likelihood} when substituted into the intermediate likelihood \eqref{eq:intermediate_sampled_netflow_likelihood}.
\end{proof}

\subsection{Proof to Proposition~\ref{prop:netflow_estimator_consistency}} \label{proof:netflow_estimator_consistency}
\begin{proof}
We first show that the fully sampled NetFlow estimator $\hat{\theta}_{S}$ is consistent.
We then define conditions which give consistency for the thinned NetFlow estimator $\hat{\theta}_{\tilde{S}}$ when applying an identical approach.
Firstly, we assume that $\Theta$ is locally compact and we denote a general compact neighbourhood by $\overline{\Theta}$.

\begin{definition}
Let $\Lambda(s;\theta) := \E_{\theta_{0}}{\left[\ell_{1}(s;\theta)\right]}$ be the $\theta$-limit of \eqref{eq:netflow_log_likelihood}.
\end{definition}

\begin{lemma} \label{lem:pointwise_convergence}
The sequence $\left\{\ell_{n}(\bm{s};\theta)\right\}_{n\geq1}$ converges to $\Lambda(s;\theta)$ pointwise by the law of large numbers.
\end{lemma}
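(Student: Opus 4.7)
The plan is to reduce the claim to a direct application of Kolmogorov's strong law of large numbers applied to the i.i.d.\ sequence of per-flow log-likelihood contributions. First I would make the i.i.d.\ structure explicit: since flows in the Bartlett--Lewis process are independent and generated by the same latent packet model indexed by $\theta_0$, the NetFlow symbols $s_1,\ldots,s_n$ obtained by applying $\varphi$ to each flow are i.i.d.\ realisations from a common distribution. Hence, for every fixed $\theta\in\Theta$, the summands
\[
Y_i(\theta) := \log\mathcal{L}_{S}(s_i;\theta), \qquad i=1,\ldots,n,
\]
form an i.i.d.\ sequence of random variables under $\mathbb{P}_{\theta_0}$.

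Second, I would verify the integrability condition needed for the strong law, namely $\mathbb{E}_{\theta_0}\!\left[\,|Y_1(\theta)|\,\right]<\infty$ for each $\theta\in\overline{\Theta}$. This is where the standing regularity assumptions pay off: the packet-level density $g_X$ and its convolutions $g_X^{*(k)}$ satisfy the standard regularity conditions of \citet{Lehmann98}, which guarantee that $\log g_X^{*(k)}(\cdot;\theta)$ has finite first absolute moment under the true distribution for each $k$ and each $\theta$ in a neighbourhood of $\theta_0$. Writing $\mathcal{L}_S(s;\theta)$ via \eqref{eq:netflow_likelihood} (or the corresponding mixture representation \eqref{eq:flow_duration_marginal_density} for the marginal of $S_d$) as a sum over the flow-size probability mass function, one then bounds $|Y_1(\theta)|$ by a $p_M$-weighted combination of $|\log g_X^{*(k)}|$ terms plus a contribution from $\log p_M(M+1;\nu)$; uniform convergence of the series \eqref{eq:series_convergence} lets me interchange sum and expectation to conclude integrability.

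Finally, having established that $Y_1(\theta),Y_2(\theta),\ldots$ are i.i.d.\ with finite mean $\mathbb{E}_{\theta_0}[Y_1(\theta)]=\Lambda(s;\theta)$, Kolmogorov's strong law of large numbers yields
\[
\ell_n(\bm{s};\theta) \;=\; \frac{1}{n}\sum_{i=1}^{n} Y_i(\theta) \;\xrightarrow{\text{a.s.}}\; \Lambda(s;\theta),
\]
which implies pointwise convergence in probability and finishes the argument.

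The main obstacle I anticipate is the integrability step rather than the LLN itself: because $\mathcal{L}_S$ is a mixture over latent flow sizes and, under sub-sampling, also over the latent pre-sampling inter-renewal sequence, controlling $\mathbb{E}_{\theta_0}[|\log \mathcal{L}_S(s_1;\theta)|]$ when $p_M$ has heavy tails (as in the Zeta case used in the simulations) requires the uniform convergence hypothesis on \eqref{eq:series_convergence}, or a compactness argument on $\overline{\Theta}$, to justify passing absolute values through the mixture. For the sub-sampled estimator $\hat{\theta}_{\widetilde{S}}$ the same scheme goes through verbatim once one replaces $\mathcal{L}_S$ by $\mathcal{L}_{\widetilde{S}}$ from Proposition~\ref{prop:sampled_netflow_likelihood} and notes that the sampled NetFlows remain i.i.d.\ across independent flows.
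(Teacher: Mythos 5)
Your proposal is correct and follows essentially the same route as the paper, which simply invokes the law of large numbers on the i.i.d.\ per-flow log-likelihood contributions without further elaboration. Your additional care over the integrability of $\log\mathcal{L}_{S}(s_1;\theta)$ (via the regularity conditions and the uniform convergence of the series \eqref{eq:series_convergence}) makes explicit the hypotheses the paper leaves implicit, but does not change the argument.
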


\begin{definition}[Identifiability] \label{def:identifiability}
A model $f(x,\theta)$ is \emph{identifiable} if $f(x,\theta_{1}) = f(x,\theta_{2})$ implies that $\theta_{1} = \theta_{2}$ for all $\theta_{1},\theta_{2}\in\overline{\Theta}$.
In other words, the model $f(x,\theta)$ is one-to-one in $\overline{\Theta}$.
\end{definition}

\begin{lemma}
Suppose that the supplied model $g_{X}$ is identifiable.
Then $\Lambda(s,\theta)$ is uniquely maximised at $\theta_{0}$.
\end{lemma}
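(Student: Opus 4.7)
The plan is to prove the claim via the classical Kullback--Leibler divergence argument underpinning maximum likelihood consistency. Writing
$$\Lambda(s;\theta_{0}) - \Lambda(s;\theta) = \mathbb{E}_{\theta_{0}}\!\left[\log\frac{\mathcal{L}_{S}(s;\theta_{0})}{\mathcal{L}_{S}(s;\theta)}\right],$$
we recognise the right-hand side as a KL divergence which is non-negative by Jensen's inequality, since $-\log$ is strictly convex and the likelihood ratio $\mathcal{L}_{S}(s;\theta)/\mathcal{L}_{S}(s;\theta_{0})$ has expectation $1$ under $\theta_{0}$ (by normalisation of $\mathcal{L}_{S}(\cdot;\theta)$ over the symbol space with respect to the product of Lebesgue and counting measures naturally associated to $(S_f, S_d, M+1)$).

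I would then proceed in four steps. First, verify that the NetFlow densities under $\theta$ and $\theta_{0}$ are dominated by a common reference measure, so that the log-ratio and its expectation are well-defined. Second, apply Jensen's inequality to obtain $\Lambda(s;\theta_{0}) \geq \Lambda(s;\theta)$ for every $\theta \in \overline{\Theta}$. Third, analyse the equality case: strict convexity of $-\log$ forces the ratio $\mathcal{L}_{S}(\cdot;\theta)/\mathcal{L}_{S}(\cdot;\theta_{0})$ to be $P_{\theta_{0}}$-almost-surely constant, and the unit-expectation property pins this constant to $1$, so that $\mathcal{L}_{S}(\cdot;\theta) = \mathcal{L}_{S}(\cdot;\theta_{0})$ almost everywhere. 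Fourth, promote this symbol-level equality to an equality of packet-level parameters.

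The main obstacle is the last step, because the NetFlow likelihood depends on $g_{X}$ not directly but through its self-convolutions $g_{X}^{*(m)}$ weighted by $p_{M}$. I would fix any flow size $m+1 \geq 2$ with $p_{M}(m+1;\nu) > 0$ and read off from the factorisation in Proposition~\ref{prop:netflow_likelihood} that $g_{X}^{*(m)}(\cdot;\theta) = g_{X}^{*(m)}(\cdot;\theta_{0})$ almost everywhere. Passing to Laplace transforms turns this into $[\tilde{g}_{X}(u;\theta)]^{m} = [\tilde{g}_{X}(u;\theta_{0})]^{m}$ on the half-plane where both transforms are positive and analytic, so the two transforms agree and hence $g_{X}(\cdot;\theta) = g_{X}(\cdot;\theta_{0})$ almost everywhere by the uniqueness of Laplace transforms. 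The identifiability hypothesis of Definition~\ref{def:identifiability} then delivers $\theta = \theta_{0}$. A tacit assumption needed here is that $p_{M}$ places positive mass on at least one flow size of length at least two, which is automatic in the paper's packet-level setting since single-packet flows carry no inter-renewal information.
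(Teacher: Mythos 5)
Your proof is correct and rests on the same Kullback--Leibler/Jensen skeleton as the paper's, but it is genuinely more complete in the one place where the paper's argument is thin. The paper simply asserts the \emph{strict} form of Jensen's inequality applied to the ratio $g_X^{*(m)}(x;\theta)/g_X^{*(m)}(x;\theta_0)$; strictness there is only legitimate when that ratio is not $\Pr_{\theta_0}$-almost-surely constant, which amounts to identifiability of the $m$-fold \emph{convolutions}, whereas the hypothesis only grants identifiability of $g_X$ itself. Your fourth step supplies exactly the missing link: after the equality-case analysis forces $g_X^{*(m)}(\cdot;\theta)=g_X^{*(m)}(\cdot;\theta_0)$ almost everywhere for some flow size with positive mass, passing to Laplace transforms gives $\tilde{g}_X(u;\theta)^{m}=\tilde{g}_X(u;\theta_0)^{m}$ on the real half-line where both transforms are positive (the inter-renewals being supported on $\R_+$), so taking $m$-th roots and invoking uniqueness of the Laplace transform yields $g_X(\cdot;\theta)=g_X(\cdot;\theta_0)$, and Definition~\ref{def:identifiability} then delivers $\theta=\theta_0$. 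Your observation that $p_M$ must place positive mass on at least one flow size of length at least two is likewise a hypothesis the paper leaves tacit. In short: same route, but your version proves that identifiability of $g_X$ propagates to identifiability of its self-convolutions, which is the step the paper's one-line appeal to strict Jensen quietly assumes.
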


\begin{proof}
The inequality 
\begin{displaymath}
    \E_{\theta_0}{\left[-\log{\left(\frac{g^{*(m)}_{X}{(x;\theta})}{g^{*(m)}_{X}{(x;\theta_{0}})}\right)}\right]} > -\log{\left(\E_{\theta_0}{\left[\frac{g^{*(m)}_{X}{(x;\theta})}{g^{*(m)}_{X}{(x;\theta_{0}})}\right]}\right)}
\end{displaymath}
\noindent arises from an application of Jensen's inequality to the strictly convex function $-\log(x)$.
Noting that the right hand side is zero, it follows that
\begin{displaymath}
    \E_{\theta_0}{\left[g^{*(m)}_{X}{\left(x;\theta_{0}\right)}\right]} > \E_{\theta_0}{\left[g^{*(m)}_{X}{\left(x;\theta\right)}\right]} 
\end{displaymath}
\noindent for all $\theta\neq\theta_{0}$.
Hence, $\Lambda(s;\theta)$ is uniquely maximised at $\theta_{0}$.
\end{proof}

\begin{definition} [Stochastic equicontinuity \citep{Newey91}]
A sequence of real-valued random functions $\left\{f_{n}(x;\theta)\right\}_{n\geq1}$ is \emph{stochastically equicontinuous} in $\theta$ if, for all positive $\varepsilon$ and $\eta$, there exists a positive $\delta$ such that
\begin{displaymath}
    \underset{n\rightarrow\infty}{\lim\sup}\,\Pr\left(\sup_{\vartheta\in\overline{\Theta}}\,\sup_{\theta\in B(\vartheta, \delta)}\left\vert f_{n}(x;\vartheta) - f_{n}(x;\theta)\right\vert > \varepsilon\right) < \eta,
\end{displaymath}

\noindent where $B(\vartheta, \delta)$ is the open metric ball centred at $\vartheta$ with radius $\delta$.
\end{definition}

\begin{lemma}
The sequences $\left\{\ell_{n}(\bm{s};\theta)\right\}^{\infty}_{n=1}$ and $\left\{\Lambda(s;\theta)\right\}^{\infty}_{n=1}$ are stochastically equicontinuous.
\end{lemma}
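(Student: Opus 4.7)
The plan is to handle the two sequences separately, exploiting the fact that $\Lambda(s;\theta) = \E_{\theta_0}[\ell_1(s;\theta)]$ is deterministic while $\ell_n$ is random. For $\{\Lambda\}_{n\geq 1}$, stochastic equicontinuity reduces to ordinary continuity of $\theta\mapsto\Lambda(s;\theta)$, which I would get by applying dominated convergence to interchange the expectation with the limit in $\theta$; the envelope required is supplied by the standard regularity conditions on $g_X$ and its self-convolutions together with the uniform convergence of the series $\xi(x;\theta)$ in \eqref{eq:series_convergence}. Since $\overline{\Theta}$ is compact, pointwise continuity upgrades to uniform continuity and the equicontinuity condition becomes vacuous for the deterministic sequence.

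For $\{\ell_n(\bm{s};\theta)\}_{n\geq 1}$ I would establish a stochastic Lipschitz-type bound. The regularity conditions ensure that $\theta\mapsto\log\mathcal{L}_S(s;\theta)$ is continuously differentiable on $\overline{\Theta}$ and that the gradient is dominated by an envelope $D(s)$ with $\E_{\theta_0}[D(S)]<\infty$. An application of the mean value theorem then yields, uniformly in $\vartheta\in\overline{\Theta}$,
\begin{displaymath}
\sup_{\theta\in B(\vartheta,\delta)}\bigl|\ell_n(\bm{s};\vartheta) - \ell_n(\bm{s};\theta)\bigr| \;\le\; \delta\cdot\frac{1}{n}\sum_{i=1}^n D(s_i),
\end{displaymath}
since the right-hand side does not depend on $\vartheta$. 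The classical law of large numbers gives $n^{-1}\sum_{i=1}^n D(s_i)\to\E_{\theta_0}[D(S)]<\infty$ in probability, so picking $\delta=\delta(\varepsilon,\eta)$ sufficiently small bounds the $\limsup$ of the probability in the definition of stochastic equicontinuity below $\eta$, for any prescribed $\varepsilon>0$.

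The main obstacle is constructing the envelope $D(s)$ for the sub-sampled NetFlow log-likelihood \eqref{eq:sampled_netflow_likelihood}, which requires passing $\partial_\theta$ through the nested sums over $m$, $j$, $k$. This interchange, as well as the subsequent $\theta$-uniform domination of the differentiated summands, relies on the joint uniform convergence of \eqref{eq:series_convergence} on $\mathcal{X}\times\Theta$ when the flow sizes $M+1$ are unbounded, combined with standard dominated-integrability of $\partial_\theta g_X^{*(k)}$ on the compact set $\overline{\Theta}$. Once this envelope is in place, the argument for the fully-observed case $\ell_n$ follows by the same route with $\mathcal{G}(s_d;\bm{\theta})$ in place of the series, and a finite-cover argument over $\overline{\Theta}$ (if needed) completes the equicontinuity claim.
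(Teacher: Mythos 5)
Your argument is correct, but it takes a genuinely different route from the paper for the random sequence $\{\ell_n\}$. The paper works directly with the increment $\gamma(s_i;\vartheta,\theta)=\log g^{*(m_i)}(s_{d_i};\vartheta)-\log g^{*(m_i)}(s_{d_i};\theta)$: it applies the triangle and Markov inequalities to bound the tail probability by $\varepsilon^{-1}\E\bigl[\sup_{\vartheta}\sup_{\theta\in B(\vartheta,\delta)}|\gamma|\bigr]$, and then shrinks $\delta$ using uniform continuity of $\theta\mapsto\log g^{*(k)}(x;\theta)$ on the compact $\overline{\Theta}$ so that this expectation falls below $\eta\varepsilon$. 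You instead invoke the Lipschitz-type sufficient condition of \citet{Newey91}: a mean-value bound $\delta\cdot n^{-1}\sum_i D(s_i)$ with an integrable gradient envelope $D$, followed by the law of large numbers. The two approaches trade assumptions. The paper's route needs the modulus of continuity of $\log g^{*(m)}(x;\cdot)$ to be controllable uniformly over the realized $(x,m)$ appearing in the argmax term $s_j$ (a point it passes over rather quickly); your route replaces this with differentiability plus $\E_{\theta_0}[D(S)]<\infty$, which is cleaner and is covered by the cited regularity conditions, at the cost of having to construct the envelope explicitly — the step you correctly identify as the real work for the sub-sampled likelihood, where the interchange of $\partial_\theta$ with the nested sums rests on the uniform convergence of \eqref{eq:series_convergence}, exactly as the paper's own treatment of the unbounded-flow-size case does. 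Your handling of the deterministic sequence $\{\Lambda\}$ coincides with the paper's (constant in $n$, uniformly continuous on compact $\overline{\Theta}$, hence trivially equicontinuous), with the minor addition of a dominated-convergence justification for continuity of $\Lambda$ that the paper leaves implicit.
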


\begin{proof}
 Recall that $\overline{\Theta}$ is compact and that the supplied model $g_{X}$ is continuous in $\Theta$.
 Hence, $\left\{\Lambda(s;\theta)\right\}_{n\geq1}$ is stochastically equicontinuous since $\Lambda(s;\theta)$ is constant in $n$ and uniformly continuous in $\overline{\Theta}$.
  
We now prove that the sequence $\left\{\ell_{n}\left(\bm{s};\theta\right)\right\}_{n\geq1}$ is stochastically equicontinuous. 
Firstly, let 
\begin{displaymath}
    \gamma{\left(s_{i};\vartheta,\theta\right)} = \log{\left(g^{*(m_{i})}{\left(s_{d_{i}};\vartheta\right)}\right)} - \log{\left(g^{*(m_{i})}{\left(s_{d_{i}};\theta\right)}\right)}.
\end{displaymath}

\noindent Then, applying the triangle inequality, Markov's inequality, and linearity of the expectation, for some positive $\varepsilon$ and $\delta$, we have that
\begin{equation} \label{eq:stochastic_equicontinuity_netflow_likelihood}
\begin{aligned}
    &\underset{n\to\infty}{\lim\sup}\,\Pr{\left(\sup_{\vartheta\in\overline{\Theta}}\sup_{\theta\in B(\vartheta,\delta)}\left\vert\ell_{n}{\left(\bm{s};\vartheta\right)} - \ell_{n}{\left(\bm{s};\theta\right)}\right\vert > \varepsilon\right)} \\
    &\quad= \underset{n\to\infty}{\lim\sup}\,\Pr{\left(\sup_{\vartheta\in\overline{\Theta}}\sup_{\theta\in B(\vartheta,\delta)}\left\vert\frac{1}{n}\sum^{n}_{i = 1}\gamma{\left(s_{i};\vartheta,\theta\right)}\right\vert > \varepsilon\right)} \\
    &\quad\leq \underset{n\to\infty}{\lim\sup}\,\Pr{\left(\sum^{n}_{i = 1}\sup_{\vartheta\in\overline{\Theta}}\sup_{\theta\in B(\vartheta,\delta)}\left\vert\gamma{\left(s_{i};\vartheta,\theta\right)}\right\vert > n\varepsilon\right)} \\
    &\quad\leq \underset{n\to\infty}{\lim\sup}\,\frac{1}{n\varepsilon}\E{\left[\sum^{n}_{i = 1}\sup_{\vartheta\in\overline{\Theta}}\sup_{\theta\in B(\vartheta,\delta)}\left\vert\gamma{\left(s_{i};\vartheta,\theta\right)}\right\vert\right]} \\
    &\quad\leq \frac{1}{\varepsilon}\E{\left[\sup_{\vartheta\in\overline{\Theta}}\sup_{\theta\in B(\vartheta,\delta)}\left\vert\gamma{\left(s_{j};\vartheta,\theta\right)}\right\vert\right]},
\end{aligned}
\end{equation}
\noindent where 
\begin{displaymath}
    j = \underset{i=1,\ldots,n}{\arg\max}\,\E{\left[\sup_{\vartheta\in\overline{\Theta}}\sup_{\theta\in B(\vartheta,\delta)}\left\vert\gamma{\left(s_{i};\vartheta,\theta\right)}\right\vert\right]}.
\end{displaymath}

\noindent For some positive $\eta$, we can choose $\delta^{\prime} = \delta{\left(\varepsilon, \eta\right)}$ such that 
\begin{displaymath}
    \sup_{\theta\in B{\left(\vartheta,\delta^{\prime}\right)}}\left\vert \log{\left(g^{*(m_{j})}{\left(\widetilde{s}_{d_j};\vartheta\right)}\right)} - \log{\left(g^{*(m_{j})}{\left(\widetilde{s}_{d_j};\theta\right)}\right)}\right\vert < \eta\varepsilon,
\end{displaymath}
\noindent since $\log{\left(g^{*(k)}{\left(x;\theta\right)}\right)}$ is also uniformly continuous in $\overline{\Theta}$.
Naturally, the inequality is also satisfied for all $i\neq j$.
Hence, we can strictly bound \eqref{eq:stochastic_equicontinuity_netflow_likelihood} from above by $\eta$.
It follows that the sequence $\left\{\ell_{n}\left(\bm{s};\theta\right)\right\}_{n\geq1}$ is stochastically equicontinuous.
\end{proof}

By an application of Theorem~2.1 of \citep{Newey91}, we surmise that $\ell_{n}$ converges to $\Lambda$ uniformly in probability.
An application of the \emph{continuous mapping theorem} with respect to the argmax functional then yields consistency of $\hat{\theta}_{S}$ for $\theta_0$.

The proof for the sampled NetFlow estimator $\hat{\theta}_{\widetilde{S}}$ is identical if the cardinality of the flow sizes is bounded.
However, if the cardinality is $\aleph_{0}$, we require the series \eqref{eq:series_convergence} to be jointly uniformly convergent so that the NetFlow log-likelihood is also uniformly continuous in $\theta$.
We can the pursue the same method of proof with the stated condition.
\end{proof}

\subsection{Proof to Proposition~\ref{prop:netflow_bounds}} \label{proof:netflow_bounds}
\begin{proof}
Firstly, let $k$ be sufficiently large.
Note that $k = 1$ may suffice for suitable flow size structures.

We can stochastically bound the variance between the \textsc{mle} and the NetFlow estimator by considering the probability
\begin{equation} \label{eq:probability_volumes}
    \Pr{\left(\left\vert\frac{\vert\Upsilon\vert^{1/2}}{\vert\Sigma\vert^{1/2}} - 1\right\vert \geq \varepsilon\right)} < \eta,
\end{equation}
\noindent since $\overline{M}$ is random, for some positive $\varepsilon$ and $\eta\in{(0,1)}$.
Let $r = n / k$ and $R = \vert I\vert / \vert H\vert$, respectively denoting the asymptotic ratio of NetFlows to complete flows and the ratio of Fisher information. 
Then, squaring the expression within the probability, substituting for $\Sigma$ and $\Upsilon$, and rearranging yields
\begin{displaymath}
    r^{-d}R^{-1}\overline{M}^{d} - 2r^{-d/2}R^{-1/2}\overline{M}^{d / 2} + 1 - \varepsilon^{2} \geq 0.
\end{displaymath}
\noindent The expression above is a positive quadratic with respect to $\overline{M}^{d / 2}$ and has roots
\begin{displaymath}
    x_{\pm} = r^{d / 2}R^{1/2}(1\pm\varepsilon). 
\end{displaymath}

\noindent Hence, the probability in \eqref{eq:probability_volumes} can be identically expressed as
\begin{equation} \label{eq:probability_volume_rewritten}
    \Pr{\left(\overline{M}\geq rR^{1/d}(1+\varepsilon)^{2/d}\right)} + \Pr{\left(\overline{M}\leq rR^{1/d}(1-\varepsilon)^{2/d}\right)} < \eta.
\end{equation}

\noindent Bounding each of the summands above by $\eta / 2$ will satisfy the expression.
Applying Chernoff's bound to the first summand, bounding the limit by $\eta / 2$, and rearranging for $r$ yields
\begin{displaymath}
    r > \sqrt[d]{\frac{R^{-1}}{(1 + \varepsilon)^{2}}}\log{\left(\frac{2}{\eta}\E{\left[\mathrm{e}^{\overline{M}}\right]}\right)}.
\end{displaymath}

\noindent Performing a similar set of operations to the latter summand yields the upper bound
\begin{displaymath}
    r < -\sqrt[d]{\frac{R^{-1}}{(1 - \varepsilon)^{2}}}\log{\left(\frac{2}{\eta}\E{\left[\mathrm{e}^{-\overline{M}}\right]}\right)}.
\end{displaymath}

\noindent We require that
\begin{displaymath}
    \E{\left[\mathrm{e}^{\overline{M}}\right]}\,\E{\left[\mathrm{e}^{-\overline{M}}\right]}^{A} < \left(\frac{\eta}{2}\right)^{A + 1}
\end{displaymath}

\noindent if we wish to jointly consider the upper and lower bounds, where $A = ((1 +\varepsilon) / (1 - \varepsilon))^{2/d}$.
Note that $A\approx1$ for particularly small $\varepsilon$.
\end{proof}

\subsection{Proof to Proposition~\ref{lem:identical_inference}} \label{proof:identical_inference}
\begin{proof}
Recalling the densities in \eqref{eq:natural_exponential_density} and \eqref{eq:natural_exponential_family_convolution_density}, for a sequence of inter-renewals $\bm{x}$ and its associated NetFlow $s = (s_{f}, s_{d}, m + 1)$, we have the respective usual and NetFlow likelihoods
\begin{displaymath}
\begin{aligned}
    \mathcal{L}{\left(\bm{x};\theta\right)} &\propto \exp{\left(\theta\sum^{m + 1}_{i = 1}x_{i} - (m + 1)A(\theta)\right)}\text{ and} \\
    \mathcal{L}_{S}{\left(s;\theta\right)}  &\propto \exp{\left(\theta\left(s_{f} + s_{d}\right) - (m + 1)A(\theta)\right)}.
\end{aligned}
\end{displaymath}

\noindent It is clear that $\mathcal{L}$ and $\mathcal{L}_{S}$ are identical since $\sum^{m+1}_{k=1}x_{k} = s_{f} + s_{d}$.
\end{proof}

\section{Method-of-moments} \label{app:simulations}
The moments-based rate estimator $\hat{\beta} = \hat{\alpha}\sum^{n}_{i = 1}w_{i}\varrho_{i}$ notably failed to converge for the synthetic network established in Example~\ref{ex:moments}.
This peculiarity arose from distributional properties of the packet intensity $\varrho$.
Recall that $\varrho$ has Inverse-Gamma distribution with parameters $\alpha'$ and $\beta$.
The expected value of $\varrho$ is not defined when $\alpha' < 1$.
This is observed in Example~\ref{ex:moments} whenever the flow size $M + 1 \leq 2$.
However, we can ensure that $\alpha' > 1$ if the shape parameter for the Gamma distributed inter-renewals $\alpha > 1$.
Alternatively, we can compute $\varrho$ only for flows with size $M > 1 / \alpha$ since the shape parameter of the convolved Gamma inter-renewals will be larger than one.

We perform identical comparative analyses to Example~\ref{ex:moments} but for the aforementioned conditions.
We maintain the same properties for the flow sizes but now consider two synthetic networks with parameters
\begin{enumerate}
    \item $(\alpha_{1}, \beta_{1}) = (1.2, 526.32)$,
    \item $(\alpha_{2}, \beta_{2}) = (0.6, 526.32)$ with $M + 1\geq 3$.
\end{enumerate}

\noindent Networks (1) and (2) respectively satisfy conditions where $\alpha > 1$ and $M > 1 / \alpha$.
Tables \ref{tab:moments_scenario_2} and \ref{tab:metadata_moments_scenario_2} respectively present the mean point estimate (and standard errors) and metadata for network (1).
Tables \ref{tab:moments_scenario_3} and \ref{tab:metadata_moments_scenario_3} similarly present the mean point estimate (and standard errors) and metadata for network (2). 
The total volume of information generated in simulating network (2) for sessions with $n = 10^{6}$ flows exceeded the local memory, and so the moment estimates were not computed. 
However, it is possible to generate an equivalent set of $10^{6}$ storable NetFlows with the same generating parameters.

Although the adjustments in networks (1) and (2) yield somewhat better estimates for for $\hat{\beta}$ and $\check{\beta}$, overestimation will persist due to the right skewness of the Inverse-Gamma distribution.
As in Example~\ref{ex:moments}, the NetFlow \textsc{mle} is preferable to the moments-based estimators.
However applications of the na\"{i}ve rate estimator $\hat{\beta}^{*}$ yields significantly better performance when utilising the complete set of inter-renewals, as expected, and is comparable to the NetFlow \textsc{mle} when restricted to flow aggregated data.

\begin{table}[t]
    \caption{\small{
    Example~\ref{ex:moments}: Mean point estimates (and standard errors) of $(\alpha,\beta)$ for network (1) from on $T = 10^{3}$ replicate synthetic sessions of size $n$. 
    True values are $(\alpha_{2},\beta_{2})=(1.2, 526.32)$.
    }}
    \label{tab:moments_scenario_2}
    \centering
    \begin{tabular}{@{}rrrrr@{}} \toprule
                        & \multicolumn{4}{c}{$n$}                                                                                                   \\ \cmidrule{2-5}
                        & $10^{0}$                     & $10^{2}$                     & $10^{4}$                     & $10^{6}$                     \\ \midrule
    \multicolumn{5}{l}{\small{\emph{Method-of-moments}}}                                                                                            \\
    $\hat{\alpha}$      & 30.67\p                      & 1.23\p                       & 1.20\p                       & 1.20\p                       \\
                        & (2.05)                       & $\left({\sim}10^{-3}\right)$ & $\left({\sim}10^{-4}\right)$ & $\left({\sim}10^{-5}\right)$ \\ \addlinespace[1ex]
    $\hat{\beta}$       & ${\sim}10^{4}$\p             & 713.70\p                     & 621.96\p                     & 589.03\p                     \\
                        & $\left({\sim}10^{3}\right)$  & (11.38)                      & (6.88)                       & (0.76)                       \\ \addlinespace[1ex]
    $\hat{\beta}^{*}$   & ${\sim}10^{4}$\p             & 539.18\p                     & 526.24\p                     & 526.31\p                     \\
                        & $\left({\sim}10^{3}\right)$  & (2.23)                       & (0.15)                       & (0.01)                       \\ \addlinespace[1ex]
    \multicolumn{5}{l}{\small{\emph{NetFlow method-of-moments}}}                                                                                    \\
    $\check{\alpha}$    & ---                          & 1.40\p                       & 1.20\p                       & 1.20\p                       \\ 
                        & ---                          & (0.01)                       & $\left({\sim}10^{-4}\right)$ & $\left({\sim}10^{-5}\right)$ \\ \addlinespace[1ex] 
    $\check{\beta}$     & ---                          & 807.93\p                     & 624.01\p                     & 589.00\p                     \\
                        & ---                          & (5.76)                       & (2.88)                       & (0.31)                       \\ \addlinespace[1ex]
    $\check{\beta}^{*}$ & ---                          & 617.16\p                     & 527.86\p                     & 526.28\p                     \\
                        & ---                          & (2.59)                       & (0.28)                       & (0.03)                       \\ \addlinespace[1ex]
    \multicolumn{5}{l}{\small{\emph{NetFlow \textsc{mle}}}}                                                                                         \\
    $\hat{\alpha}_{S}$  & ${\sim}10^{30}$\p            & 1.31\p                       & 1.20                         & 1.20\p                       \\
                        & $\left({\sim}10^{29}\right)$ & (0.01)                       & $\left({\sim}10^{-4}\right)$ & $\left({\sim}10^{-5}\right)$ \\ \addlinespace[1ex]
    $\hat{\beta}_{S}$   & ${\sim}10^{33}$\p            & 575.40\p                     & 526.48\p                     & 526.35\p                     \\
                        & $\left({\sim}10^{33}\right)$ & (4.60)                       & (0.37)                       & (0.04)                       \\ \bottomrule 
    \end{tabular}
\end{table}

\begin{table}[t]
    \caption{\small{
    Example~\ref{ex:moments}: Mean session information volume (megabytes) and computation time (milliseconds) for network (1) over various session sizes $n$.
    }}
    \label{tab:metadata_moments_scenario_2}
    \centering
    \begin{tabular}{@{}lrrrr@{}} \toprule
                              & \multicolumn{4}{c}{$n$}                                             \\ \cmidrule{2-5}                          
                              & $10^{0}$        & $10^{2}$         & $10^{4}$        & $10^{6}$     \\ \midrule
    \multicolumn{5}{l}{\small{\emph{Method-of-moments}}}                                                      \\
    Information (\textsc{mb}) & ${\sim}10^{-4}$ & 0.01             & 0.89            & 116.20       \\
    Time $(ms)$               & ${\sim}10^{-1}$ & ${\sim}10^{-2}$  & 1\ppp           & 249\ppp      \\ \addlinespace[1ex]
    \multicolumn{5}{l}{\small{\emph{NetFlow method-of-moments}}}                                       \\
    Information (\textsc{mb}) & ---             & ${\sim}10^{-3}$  & 0.09            & 9.31         \\
    Time $(ms)$               & ---             & ${\sim}10^{-2}$  & ${\sim}10^{-1}$ & 12\ppp       \\ \addlinespace[1ex]
    \multicolumn{5}{l}{\small{\emph{NetFlow \textsc{mle}}}}                                            \\
    Information (\textsc{mb}) & ${\sim}10^{-4}$ & ${\sim}10^{-3}$  & 0.09            & 9.31         \\
    Time $(ms)$               & 1               & 1                & 58\ppp          & $5\,673$\ppp \\ \bottomrule
    \end{tabular}
\end{table}

\begin{table}[t]
    \caption{\small{
    Example~\ref{ex:moments}: Mean point estimates (and standard errors) of $(\alpha,\beta)$ for network (2) from $T = 10^{3}$ replicate synthetic sessions of size $n$. 
    True values are $(\alpha_{3},\beta_{3})=(0.6, 526.32)$ with truncated flow sizes $M+1\geq 3$. 
    ---$^a$ indicates that the size of the dataset was too large to be read into memory. 
    }}
    \label{tab:moments_scenario_3}
    \centering
    \begin{tabular}{@{}rrrrr@{}} \toprule
                        & \multicolumn{4}{c}{$n$}                                                                                                   \\ \cmidrule{2-5}
                        & $10^{0}$                     & $10^{2}$                     & $10^{4}$                     & $10^{6}$                     \\ \midrule
    \multicolumn{5}{l}{\small{\emph{Method-of-moments}}}                                                                                            \\
    $\hat{\alpha}$      & 430.34\p                     & 0.60\p                       & 0.60\p                       & ---$^{a}$                    \\
                        & (420.24)                     & $\left({\sim}10^{-3}\right)$ & $\left({\sim}10^{-4}\right)$ & ---$^{a}$                    \\ \addlinespace[1ex]
    $\hat{\beta}$       & ${\sim}10^{5}$\p             & 677.72\p                     & 613.49\p                     & ---$^{a}$                    \\
                        & $\left({\sim}10^{5}\right)$  & (7.63)                       & (2.12)                       & ---$^{a}$                    \\ \addlinespace[1ex]
    $\hat{\beta}^{*}$         & ${\sim}10^{5}$\p             & 528.65\p                     & 526.31\p                     & ---$^{a}$              \\
                        & $\left({\sim}10^{5}\right)$  & (1.09)                       & (0.08)                       & ---$^{a}$                    \\ \addlinespace[1ex]
    \multicolumn{5}{l}{\small{\emph{NetFlow method-of-moments}}}                                                                                    \\
    $\check{\alpha}$    & ---                          & 0.63\p                       & 0.60\p                       & 0.60\p                       \\ 
                        & ---                          & $\left({\sim}10^{-3}\right)$ & $\left({\sim}10^{-4}\right)$ & $\left({\sim}10^{-4}\right)$ \\ \addlinespace[1ex] 
    $\check{\beta}$     & ---                          & 709.97\p                     & 613.91\p                     & 590.06\p                     \\
                        & ---                          & (10.85)                      & (2.21)                       & (0.76)                       \\ \addlinespace[1ex]
    $\check{\beta}^{*}$ & ---                          & 553.98\p                     & 526.65\p                     & 526.33\p                     \\
                        & ---                          & (3.80)                       & (0.40)                       & (0.04)                       \\ \addlinespace[1ex]
    \multicolumn{5}{l}{\small{\emph{NetFlow \textsc{mle}}}}                                                                                         \\
    $\hat{\alpha}_{S}$  & ${\sim}10^{30}$\p            & 0.61\p                       & 0.60\p                       & 0.60\p                       \\
                        & $\left({\sim}10^{29}\right)$ & $\left({\sim}10^{-3}\right)$ & $\left({\sim}10^{-4}\right)$ & $\left({\sim}10^{-4}\right)$ \\ \addlinespace[1ex]
    $\hat{\beta}_{S}$   & ${\sim}10^{33}$\p            & 540.03\p                     & 526.50\p                     & 526.33\p                     \\
                        & $\left({\sim}10^{33}\right)$ & (2.39)                       & (0.23)                       & (0.03)                       \\ \bottomrule 
    \end{tabular}
\end{table}

\begin{table}[t]
    \caption{\small{
    Example~\ref{ex:moments}: Mean session information volume (megabytes) and computation time (milliseconds) for network (2) over various session sizes $n$. 
    ---$^a$ indicates that the size of the dataset was too large to be read into memory.
    }}
    \label{tab:metadata_moments_scenario_3}
    \centering
    \begin{tabular}{@{}lrrrr@{}} \toprule
                              & \multicolumn{4}{c}{$n$}                                                \\ \cmidrule{2-5}                          
                              & $10^{0}$        & $10^{2}$         & $10^{4}$        & $10^{6}$        \\ \midrule
    \multicolumn{5}{l}{\small{\emph{Method-of-moments}}}                                               \\
    Information (\textsc{mb}) & ${\sim}10^{-4}$ & 0.04             & 3.76            & ${\sim}450$\ppp \\
    Time $(ms)$               & ${\sim}10^{-2}$ & ${\sim}10^{-1}$  & 6\ppp           & ---$^{a}$       \\ \addlinespace[1ex]
    \multicolumn{5}{l}{\small{\emph{NetFlow method-of-moments}}}                                       \\
    Information (\textsc{mb}) & ---             & ${\sim}10^{-3}$  & 0.24            & 24.00           \\
    Time $(ms)$               & ---             & ${\sim}10^{-2}$  & ${\sim}10^{-1}$ & 65\ppp          \\ \addlinespace[1ex]
    \multicolumn{5}{l}{\small{\emph{NetFlow \textsc{mle}}}}                                            \\
    Information (\textsc{mb}) & ${\sim}10^{-4}$ & ${\sim}10^{-3}$  & 0.24            & 24.00           \\
    Time $(ms)$               & 3              & 2                 & 126\ppp         & $20\,240$\ppp   \\ \bottomrule
    \end{tabular}
\end{table}

\end{document}